\newcommand{\removelatexerror}{\let\@latex@error\@gobble}
\newtheorem{definition}{Definition}
\newtheorem{proposition}{Proposition}
\newtheorem{assumption}{Assumption}
\newtheorem{theorem}{Theorem}
\newtheorem{lemma}{Lemma}
\let\oldnl\nl
\newcommand{\nonl}{\renewcommand{\nl}{\let\nl\oldnl}}
\begin{document}
%
\title{Alliance Makes Difference? Maximizing Social Welfare in Cross-Silo Federated Learning}
%
%
%

\author{Jianan~Chen,
        Qin~Hu,
        and~Honglu~Jiang
\thanks{Jianan Chen and Qin Hu (Corresponding Author) are with the Department of Computer and Information Science, Indiana University - Purdue University Indianapolis, IN, USA. Email: jc144@iu.edu, qinhu@iu.edu}       
\thanks{Honglu Jiang is with the Department of Computer Science and Software Engineering, Miami University, Oxford, OH, USA. Email: jiangh34@miamioh.edu}
\thanks{This work is partly supported by the US NSF under grant CNS-2105004.}
}

%
%

\markboth{Journal of \LaTeX\ Class Files,~Vol.~14, No.~8, August~2015}%
{Shell \MakeLowercase{\textit{et al.}}: Bare Demo of IEEEtran.cls for IEEE Journals}
%



\maketitle

\begin{abstract}
As one of the typical settings of Federated Learning (FL), cross-silo FL allows organizations to jointly train an optimal Machine Learning (ML) model. In this case, some organizations may try to obtain the global model without contributing their local training power, lowering the social welfare. In this paper, we model the interactions among organizations in cross-silo FL as a public goods game and theoretically prove that there exists a social dilemma where the maximum social welfare is not achieved in Nash equilibrium. To overcome this dilemma, we employ the Multi-player Multi-action Zero-Determinant (MMZD) strategy to maximize the social welfare. With the help of the MMZD, an individual organization can unilaterally control the social welfare without extra cost. Since the MMZD strategy can be adopted by all organizations, we further study the case of multiple organizations jointly adopting the MMZD strategy to form an MMZD Alliance (MMZDA). We prove that the MMZDA strategy can strengthen the control of the maximum social welfare. Experimental results validate that the MMZD strategy is effective in obtaining the maximum social welfare and the MMZDA can achieve a larger maximum value.
\end{abstract}

\begin{IEEEkeywords}
Federated learning, public goods game, zero-determinant strategy, social welfare, game theory
\end{IEEEkeywords}

%
\IEEEpeerreviewmaketitle

\section{Introduction}
%
%
%
%
\IEEEPARstart{I}{n} Federated Learning (FL), clients cooperatively train a Machine Learning (ML) model with their decentralized datasets under the coordination of a central server \cite{mcmahan2017communication}. One of the typical settings of FL is cross-silo FL \cite{kairouz2019advances} where a neutral third-party agent acts as the central server and clients are a group of organizations, aiming to jointly train an optimal ML model for their respective use. In this case, these organizations are also the owners of the global model and can utilize the well-trained global model to further process tasks for their own interests. 

An optimal global model with high performance requires the organizations in cross-silo FL to collaborate efficiently so as to bring considerable benefits to all participants. The cooperative behavior of organizations participating in global aggregation can improve the social welfare. In fact, there are many studies on optimizing the social welfare in cross-silo FL by improving the model performance \cite{huang2021personalized,wang2021efficient,nandury2021cross,majeed2020cross}, increasing the convergence speed \cite{marfoq2020throughput}, reducing the communication cost \cite{zhang2020batchcrypt}, protecting privacy \cite{heikkila2020differentially,li2021practical,long2021federated} and security \cite{jiang2021flashe}, etc.

However, since every organization in cross-silo FL can obtain the final global model regardless of its contribution, the well-trained model becomes a public good, which is non-excludable and non-rivalrous for all organizations \cite{tang2021incentive}. This leads to selfish behaviors that some organizations may only consider their own interests via inactively participating in local training to obtain the final global model for free or at a lower cost. The spread of this behavior can result in a huge loss of the social welfare, and then none of the organizations can get the optimal model, which compromises the long-term stability and sustainability of cross-silo FL.

Most of the existing studies improve the social welfare by designing incentive mechanisms to promote organizations’ full cooperation in cross-silo FL \cite{tang2021incentive,li2022incentive,zeng2022incentive,zhang2022enabling,das2021cross}. However, incentive mechanism requires extra negotiation costs since organizations need to reach a consensus on the mechanism in advance. And it also demands additional running costs, where a distributed algorithm runs over all organizations, clearly adding more burden to organizations.

In this paper, we model the interactions among organizations in cross-silo FL into a public goods game, named cross-silo FL game. Instead of incentivizing organizations' participation, we take a brand-new approach using the Multi-player Multi-action Zero-Determinant (MMZD) strategy \cite{7381622} to directly maximize the social welfare in cross-silo FL without causing additional negotiation costs and running costs for all organizations. Moreover, the MMZD Alliance (MMZDA) formed by multiple MMZD players is studied to explore whether a better maximum value of the social welfare can be achieved. Another outstanding advantage of our methods is that they can be applied to any cross-silo FL scenario no matter what strategies or actions other organizations perform. 

In summary, our contributions include (a preliminary version of this paper is presented in ICASSP 2022 \cite{chen2022social}):
\begin{itemize}
\item  We model the interactions among organizations in cross-silo FL as a public goods game, focusing on the organization's strategy rather than designing an extra mechanism to solve the social welfare maximization problem.
\item We reveal the existence of the social dilemma in cross-silo FL by mathematical proof for the first time, which demonstrates the adverse effect of selfish behaviors in cross-silo FL in the view of game theory. This can be used as a theoretical basis for exploring organizations' behaviors in cross-silo FL.
\item We overcome the social dilemma by employing the MMZD strategy from the perspectives of individual organization and alliance. Specifically, any organization can unilaterally maximize the social welfare, which ensures the social welfare in cross-silo FL at a certain level and maintains the stability of the system. 
\item We further study the scenario in which multiple organizations adopt the same MMZD strategy, forming the MMZDA. We theoretically prove that the maximum social welfare controlled by the MMZDA can reach a higher value. This approach also extends the applications of the MMZD.
\item Experiments prove the effectiveness of the MMZD strategy in maximizing the social welfare. And the maximum value of the social welfare is able to be enlarged by the MMZDA strategy. 
\end{itemize}

The rest of the paper is organized as follows. The related work about cross-silo FL is summarized in Section II. In Section III, we formulate the cross-silo FL game to model the interactions among organizations in cross-silo FL, and further discover the social dilemma in the game. We propose a method based on MMZD strategy for individual organization to control the social welfare in Section IV. Section V studies the MMZD strategy employed by multiple organizations, namely MMZDA, and validates that the MMZDA can enlarge the maximum value of the social welfare compared to the individual MMZD. Simulation results are reported in Section VI, followed by the conclusion in Section VII.

\section{Related Work}

Existing works related to cross-silo FL can be classified into three categories: the optimization of aggregation algorithm, security and privacy protection, and the incentive mechanism design.

Aggregation algorithms are developed to enhance the performance of cross-silo FL. Based on the original FedAvg algorithm \cite{mcmahan2017communication}, various algorithms to improve the convergence speed, accuracy, and security were proposed in cross-silo FL setting. Marfoq et al. introduced practical algorithms to design an averaging policy under a decentralized model for achieving the fastest convergence \cite{marfoq2020throughput}. And Huang et al. proposed FedAMP to overcome non-iid challenges \cite{huang2021personalized}. Zhang et al. reduced the encryption and communication overhead caused by additively homomorphic encryption, which lowered the cost of aggregation as well \cite{zhang2020batchcrypt}.

Security and privacy protection issues also received attention in cross-silo FL. Heikkil et al. combined additively homomorphic secure summation protocols with differential privacy to guarantee strict privacy for individual data subjects in the cross-silo FL setting \cite{heikkila2020differentially}. Li et al. proposed a brand-new one-shot algorithm that can flexibly achieve differential privacy guarantees \cite{li2020practical}. Jiang et al. designed FLASHE, an optimized homomorphic encryption scheme to meet the requirements of semantic security and additive homomorphism in cross-silo FL \cite{jiang2021flashe}. Chu et al. proposed a federated estimation method to accurately estimate the fairness of a model, namely avoiding the bias of data, without infringing the data privacy of any party \cite{chu2021fedfair}.

Many incentive mechanisms based on auction \cite{le2021incentive}, contract \cite{tian2021contract,feng2019joint}, and pricing \cite{shao2019multimedia} were proposed for FL since the performance of FL is affected by clients' behaviors. Unfortunately, most of schemes cannot be adapted to the cross-silo FL directly. Only a few studies successfully conducted incentive mechanisms in cross-silo FL to encourage organizations to participate in global aggregation. Tang et al. proposed an incentive mechanism for cross-silo FL to address the organization heterogeneity and the public goods characteristics by solving a non-convex optimization problem\cite{tang2021incentive}. Li et al. proposed an incentive mechanism for the cross-silo FL scenario, stimulating the organizations to provide more high-quality data by deploying a knowledge distillation algorithm \cite{li2022incentive}. Zeng et al. considered cross-silo FL as a perfectly competitive market, proposing the Cournot model,
Stackelberg-Cournot model, and Cournot-Stackelberg model to analyze the economical behaviors of organizations\cite{zeng2022incentive}. Zhang et al. explored the long-term participation of organizations in cross-silo FL by calculating their equilibrium participation strategy, where an algorithm was designed for organizations to reduce free riders and increase the amount of local data for global aggregation \cite{zhang2022enabling}. 

However, the application of incentive mechanisms in cross-silo FL scenario still faces many challenges. First of all, most of the existing incentive mechanisms focus on encouraging more organizations to participate in cross-silo FL, instead of improving the performance of the global model from the perspective of the social welfare. Secondly, the global model in cross-silo FL has the non-exclusive nature of public goods, leading to potential free-riding behaviors, which is rarely considered in the existing research. Last but not least, current incentive mechanisms usually require participants or servers to spend additional computing resources. As a result, complex designs and additional computing costs can bring a burden on the organizations. 


In light of the above analysis, our work is distinct with the existing approaches in the following aspects. First, our study reveals the social dilemma in the cross-silo FL. Second, we are committed to directly maximizing the social welfare to overcome the social dilemma. Moreover, we adopt the MMZD strategy without additional cost to control the maximum value of the social welfare. Furthermore, we use the MMZDA to further expand the ability to control the maximum value of the social welfare. 
\section{System Model} \label{Modeling}
We consider a cross-silo FL scenario with a set of organizations, denoted as $\mathcal{N}=\{1,2,...,N\}$. All organizations rely on a central server to collaboratively conduct global model training for a specific task, where each of them has their own data for local training\footnote{The data distributions of organizations would not affect the problem formulation and results of our method. Here we assume that the data of organizations are i.i.d., following other works on cross-silo FL \cite{heikkila2020differentially,wang2022safeguarding}.}. The goal of organizations is to obtain an optimal global model, minimizing the loss based on all datasets. In one communication round, the central server collects the results of local model updates from all organizations, aggregates to obtain the global model, and then distributes it to everyone for the next round of local training. 

In each round of local training, every organization performs $K$ iterations of model training. We denote the number of global communication rounds for aggregation as $r$. For the current task, the action of organization $i\in\mathcal{N}$, denoted as $y_i\in\{0,1,...,r\}$, represents the number of communication rounds it participates in the task. The organization $i$ randomly selects $y_i$ communication rounds to participate. Then, $\mathbf{y}=(y_1,...,y_i,...,y_N)$ denotes the action vector of all organizations. Here we assume that all organizations in this cross-silo FL may participate in fewer global aggregations due to laziness or selfishness, but they do not carry out malicious attacks, such as model poisoning attack.

According to the cross-silo FL model, all organizations get the same model in return. Inspired by \cite{tang2021incentive}, we define the revenue of organization $i$ as:
\begin{equation}\label{eq:revenue}
\Phi_i(\mathbf{y})=m_i(\chi_0-\chi(\mathbf{y})),
\end{equation}
where $m_i$ (in dollars per unit of precision function) denotes the unit revenue of organization $i$ by using the returned final model, $\chi_0$ denotes the precision of the untrained model, and $\chi(\mathbf{y})$ denotes the precision of the trained global model after the actions of organizations in the action vector $\mathbf{y}$. Specifically, $\chi(\mathbf{y})$ can be modeled as:
\begin{equation}
\chi(\mathbf{y})=\frac{\theta_0}{\theta_1+K \sum_{i\in N} y_i},
\end{equation}
with positive coefficients $\theta_0$ and $\theta_1$ \cite{li2019convergence} being derived based on the loss function, neural network, and local datasets. In particular, we have $\chi_0=\frac{\theta_0}{\theta_1}$. The revenue of each organization is proportional to the difference between the expected loss after $r$ communication rounds aggregation (i.e., $\chi(\mathbf{y})$) and the minimum expected loss (i.e., $\chi_0$) \cite{tang2021incentive}. As the number of total participation rounds increases, the marginal decrease of the difference reduces.

We define the cost of organization $i$ as:
\begin{equation}
\Psi_i(y_i)=C^{i}_p(y_i)+C^{i}_m(y_i).    
\end{equation}
The cost is composed of the organization's computation cost $C^{i}_p(y_i)$ and its communication cost $C^{i}_m(y_i)$. The computation cost $C^{i}_p(y_i)=\beta_i K y_i$, where $\beta_i$ is a positive parameter, denoting the computation cost of each iteration in organization $i$'s local training\footnote{As \cite{tran2019federated} shows, $\beta_i=\frac{\alpha_i}{2} f^2_i d_i S_i$, where $\frac{\alpha_i}{2}$ is the effective capacitance coefficient of organization $i$'s computing chipset, $f_i$ denotes the calculation processing capacity, $d_i$ denotes the number of data units, and $S_i$ denotes the number of CPU cycles required by organization $i$ to process one data unit.}. While the communication cost is defined as $C^{i}_m(y_i)=\rho_i\tau_i y_i$, where $\rho_i$ is a positive parameter denoting the communication power of organization $i$, and $\tau_i$ is its time cost uploading the model updates in one global communication round\footnote{According to \cite{tran2019federated}, the time cost $\tau_i$ of organization $i$ is calculated by $\tau_i=\frac{D_i}{R_i}$, where $D_i$ denotes the size of model updates and $R_i$ denotes the transmission rate. The transmission rate $R_i$ can be calculated as $R_i=B\ln{(1+\frac{G_i\rho_i}{N_0})}$, where $B$ is the bandwidth, $N_0$ is the background noise, and $G_i$ is the channel gain.}.

Then the utility of organization $i$ is defined as
the difference between its revenue and cost:
\begin{equation}\label{eq:utility}
U^i(\mathbf{y})=\Phi_i(\mathbf{y})-\Psi_i(y_i).
\end{equation}
According to previous statements, we model the interactions among organizations as a $\emph{cross-silo FL game}$. 
\begin{definition}
(Cross-silo FL game). In the cross-silo FL game, the participating organizations act as players, where organization $i$'s action and utility are $y_i$ and $U^i(\mathbf{y})$, respectively.
\end{definition}
The cross-silo FL game can be iterative since these organizations in cross-silo FL usually cooperate for a long time to finish multiple FL tasks. Each game round in the cross-silo FL game corresponds to a certain FL task. Moreover, the social welfare in the cross-silo FL game can be denoted as the total utility of all organizations $i$, namely $\sum^N_{i=1} U^i(\mathbf{y})$. In the cross-silo FL game, we find that the social dilemma occurs if $\Phi_i(\mathbf{y})-C^{i}_p(y_i)<0$, which can be summarized as below.

\begin{assumption}\label{assumption}
In the cross-silo FL game, we assume that if any organization $i\in\mathcal{N}$ trains the local model with only its own dataset without joining cross-silo FL, the utility is negative. Namely, $m_i(\chi_0-\frac{\theta_0}{\theta_1+Ky_i})-C^{i}_p(y_i)<0$ holds for all $i\in\mathcal{N}$.
\end{assumption}
The first part of the formula $m_i(\chi_0-\frac{\theta_0}{\theta_1+Ky_i})$ represents the model gain that organization $i$ can obtain from local training. The second part of the formula $C^{i}_p(y_i)$ represents the local computation cost of organization $i$. In fact, this condition is consistent with the organizations' motivation to participate in global aggregation in the cross-silo FL game.
\begin{lemma}\label{lemma:bestresponse}
(Nash equilibrium). Under Assumption \ref{assumption}, a Nash equilibrium point in the cross-silo FL game is:
\begin{equation*}
    y^{NE}_i=0.
\end{equation*}
\begin{proof}
Referring to (\ref{eq:utility}), we can derive the derivative of $U^i$ regarding $y_i$ as $\frac{\partial U^i}{\partial y_i}=m_i\frac{K \theta_0}{(\theta_1+K \sum y_i)^2}- (\beta_i K+a_i\tau_i)$. Given $m_i(\chi_0-\frac{\theta_0}{\theta_1+Ky_i})-C^{i}_p(y_i)<0$, we have:
\begin{equation*}
m_i\frac{K\theta_0}{(\theta_1+K y_i)\theta_1}< \beta_i K,
\end{equation*}
which leads to:
\begin{equation*}
m_i\frac{K \theta_0}{(\theta_1+K \sum y_i )^2}<m_i\frac{K\theta_0}{(\theta_1+K y_i)\theta_1}<\beta_i K<\beta_i K+a_i\tau_i.
\end{equation*}
Then we can draw the conclusion that $\frac{\partial U^i}{\partial y_i}<0$. Thus, the utility function of organization $i$ decreases monotonically with $y_i$. So the best response of organization $i$ is $y^{BR}_i=0$ regardless of other organizations' strategies. And thus, $\mathbf{y}^{NE}=(0,0,...,0)$ is a Nash equilibrium point.
\end{proof}
\end{lemma}
In addition, the general solution of the Nash equilibrium in the cross-silo FL game can be expressed as:
\begin{equation*}\label{eq:NashEq}
\begin{split}
\begin{cases}
y_1=\arg \max{[\Phi_1(y_1,\textbf{y}_{-1})-\Psi_1(y_1)]},\\
\cdots\\
y_N=\arg \max{[\Phi_N(y_N,\textbf{y}_{-N})-\Psi_N(y_N)]},\\
\end{cases}
\end{split}
\end{equation*}
where $\textbf{y}_{-i}$ denotes the actions other than organization $i$. In this paper, we concentrate on the Nash equilibrium point identified by Lemma \ref{lemma:bestresponse} as it would lead to a social dilemma.

\begin{definition}\label{def:SD}
(Social dilemma).  In the cross-silo FL game, the social dilemma occurs when the best actions of individual organizations contrast with the maximization of social welfare, i.e., a Nash equilibrium point is not the social welfare maximum point.
\end{definition}
\begin{theorem}\label{th:SD}
Under Assumption \ref{assumption} and Definition \ref{def:SD}, there exists a social dilemma in the cross-silo FL game. 

\begin{proof}
According to Lemma \ref{lemma:bestresponse}, $\mathbf{y}^{NE}=(0,0,\dots,0)$ is a Nash equilibrium point. Next, we prove that the point $\mathbf{y}^{r}=(r,r,\dots,r)$ results in the social welfare:
\begin{equation*}
\sum^N_{i=1}U^i(\mathbf{y}^{r})=\sum \Psi_i(\mathbf{y}^{r})-\sum C^i_p-\sum C^i_m>0,
\end{equation*}
which is higher than that in the Nash equilibrium point:
\begin{equation*}
\sum^N_{i=1}U^i(\mathbf{y}^{NE})=-\sum C^i_m<0.    
\end{equation*}
The Nash equilibrium point cannot lead to the maximum social welfare, so the social dilemma exists.
\end{proof}
\end{theorem}

If the organization only pursues its own interest and does not participate in the communication round for global aggregation, it will lead to a low social welfare. From the perspective of an individual organization, free riding is an attractive action to get the final global model at a low cost. However, when all organizations are free riders, the model performance can be poor, which is unfavorable for everyone. Therefore, an organization cannot arbitrarily choose the free riding behavior for its own benefit without considering the social welfare. From the perspective of collective, if all organizations only look out for their own interests, the model performance of cross-silo FL systems will degrade as selfish behavior spreads.  

From the above analysis, both individual and group have sufficient motivation to pay attention to the social welfare and try to maximize the social welfare. For reference, we summarize key notations used in the system model in Table I. 

\begin{table}[htbp]\label{notation}
\centering
\caption{Key Notations.}
\centering
\begin{tabular}{|c|l|}
\hline
     Notation & Meaning \\ \hline
     $N$   &The number of organizations \\ \hline
     $K$   &The number of local training iterations for every organization\\ \hline
      $r$   &The number of global communication rounds for aggregation \\ \hline
      $y_i$   & \begin{tabular}[c]{@{}l@{}}The number of communication rounds organization $i$\\ participates in the current task\end{tabular}\\ \hline
      $\mathbf{y}$   &The action vector of all organizations\\ \hline
      $\Phi_i$   &The revenue of organization $i$\\ \hline
      $m_i$   &The unit revenue of organization $i$ by using the final model\\ \hline
      $\chi(\mathbf{y})$   &\begin{tabular}[c]{@{}l@{}} The precision of the trained global model with the\\corresponding action vector $\mathbf{y}$\end{tabular}\\ \hline
      $\Psi_i$   &The cost of organization $i$\\ \hline
      $C^{i}_p(y_i)$   &The computation cost of organization $i$\\ \hline
      $C^{i}_m(y_i)$   &The communication cost of organization $i$\\ \hline
      $\beta_i$   & \begin{tabular}[c]{@{}l@{}}The computation cost of each iteration in organization $i$'s\\ local training\end{tabular}\\ \hline
      $\rho_i$   & The communication power of organization $i$ \\ \hline 
      $\tau_i$   & \begin{tabular}[c]{@{}l@{}} The time cost of organization $i$ uploading the model\\ updates in one global communication round\end{tabular} \\ \hline
      $U^i(\mathbf{y})$   & \begin{tabular}[c]{@{}l@{}} The utility of organization $i$ with the corresponding\\ action vector $\mathbf{y}$\end{tabular}\\ \hline
\end{tabular}
\end{table}

\section{Social welfare maximization by MMZD}\label{maxwelfare}
According to the analysis above, we can see that the underlying cause of the social dilemma is selfishness, leading to the loss of all organizations, namely the low social welfare. All organizations are reluctant to face the bad consequences of low social welfare, so in this section, we explore whether an organization can maximize the social welfare as much as possible in the cross-silo FL game. We start by analyzing an organization's strategy.

In each game round, any organization can choose the action $y_i\in\{0,1,...,r\}$, so there are $(r+1)^N$ possible outcomes for each game round. Fig. \ref{fig:illustration} describes an example of the cross-silo FL game with two organizations and three actions, i.e., $N=2$ and $r=2$, in which all possible outcomes can be denoted as $(y_1,y_2)\in\{(0,0),(0,1),(0,2),(1,0),(1,1),(1,2),(2,0),(2,1),(2,2)\}$. For arbitrary organization $i\in\mathcal{N}$, its mixed strategy $\mathbf{p}^i$ is defined as:
\begin{equation}
\mathbf{p}^i=[p^i_{1,0},p^i_{1,1},...,p^i_{1,r},p^i_{2,0},...,p^i_{j,g},...,p^i_{(r+1)^{N},r}]^T,
\end{equation}
where $p^i_{j,g} (j\in\{1,2,...,(r+1)^N\},g\in\{0,1,...,r\})$ represents the probability of organization $i$ choosing action $y_i=g$ in the current game round and other organizations choosing the same actions as $j$-th outcome of the previous game round. Here we assume that the organizations have one-round memory since a long-memory player has no priority against others with short memory \cite{7381622}. Under this assumption, the decision of the current game round of organization $1$ is only related to the previous game round. Therefore, this process has the Markov property, and we can regard it as a Markov process. As presented in Fig. \ref{fig:illustration}, for the previous outcome $(y_1,y_2)=(0,2)$, the conditional probability of organization $1$ to select action $y_1=1$ is $p^1_{3,1}$ and the conditional probability of organization 2 to select action $y_2=0$ is $p^2_{3,0}$. In addition, the corresponding utility vector $\mathbf{u}^i$ is denoted as:
\begin{equation}
\mathbf{u}^i=[u^i_{1,0},u^i_{1,1},...,u^i_{1,r},u^i_{2,0},...,u^i_{j,g},...,u^i_{(r+1)^{N},r}]^T,
\end{equation}
where each utility $u^i_{j,g}$ of organization $i$ choosing action $y_i=g$ in the $j$-th outcome can be calculated by $u^i_{j,g}=U^i(\mathbf{y}^{(j,g)})$, with $\mathbf{y}^{(j,g)}$ denoting the action vector $\mathbf{y}$ corresponding to the $j$-th outcome but $y_i=g$. Fig. \ref{fig:example}(a) presents the strategy vectors and utility vectors based on the example shown in Fig. \ref{fig:illustration}.

\begin{figure}
\centering
\includegraphics[scale=0.7]{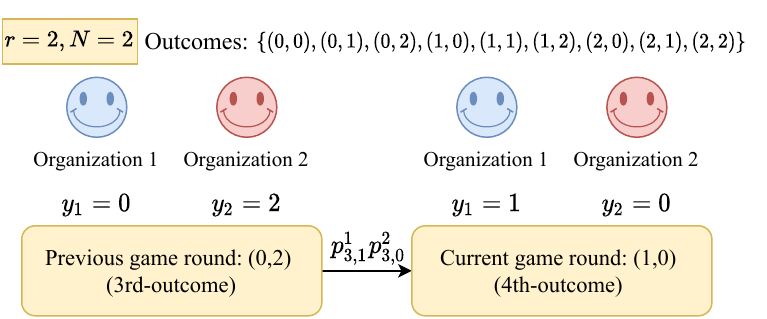}
\caption{Illustration of the cross-silo FL game with two organizations and three actions ($N=2$ and $r=2$).}
\label{fig:illustration} 
\end{figure}

In the cross-silo FL model, an organization’s current move depends only on its last action and the action vector $\mathbf{y}$ in the last game round. We can construct a Markov matrix $\mathbf{M}=[M_{vw}]_{(r+1)^{N} \times (r+1)^{N}}$, with each element $M_{vw}$ denoting the one-step transition probability from state $v$ to $w$. Fig. \ref{fig:example}(b) shows the Markov matrix $\mathbf{M}$ for the case of $r=2$ and $N=2$. For example, the element $p^1_{3,1}p^2_{3,0}$ is at the 3rd row and 4th column in Fig. \ref{fig:example}, which represents the possibility of transitioning from the 3rd-outcome (0,2) in the previous round to the 4th-outcome (1,0) in the current round. Then, we define $\mathbf{M}'\equiv \mathbf{M}-\textbf{I}$ where $\textbf{I}$ is an identity matrix. And we assume the stationary vector of $\mathbf{M}$ is $\pi$. Given $\pi^T\mathbf{M}=\pi^T$, we can draw that $\pi^T\mathbf{M}'=0$. According to Cramer’s rule, we have: 
\begin{equation}
Adj(\mathbf{M}')\mathbf{M}'=\det(\mathbf{M}')\textbf{I}=0,
\end{equation}
where $Adj(\mathbf{M}')$ denotes the adjugate matrix of $\mathbf{M}'$. Thus, it can be noted that every row of $Adj(\mathbf{M}')$ is proportional to $\pi$. Hence for any vector $\mathbf{a}=(a_1,a_2,\dots,a_{(r+1)^N})^T$, we can draw $\pi^T\cdot\mathbf{a}=\det (\mathbf{p}^1,\dots,\mathbf{p}^N,\mathbf{a})$ according to \cite{press2012iterated}. 

\begin{figure*}
\includegraphics[scale=0.75]{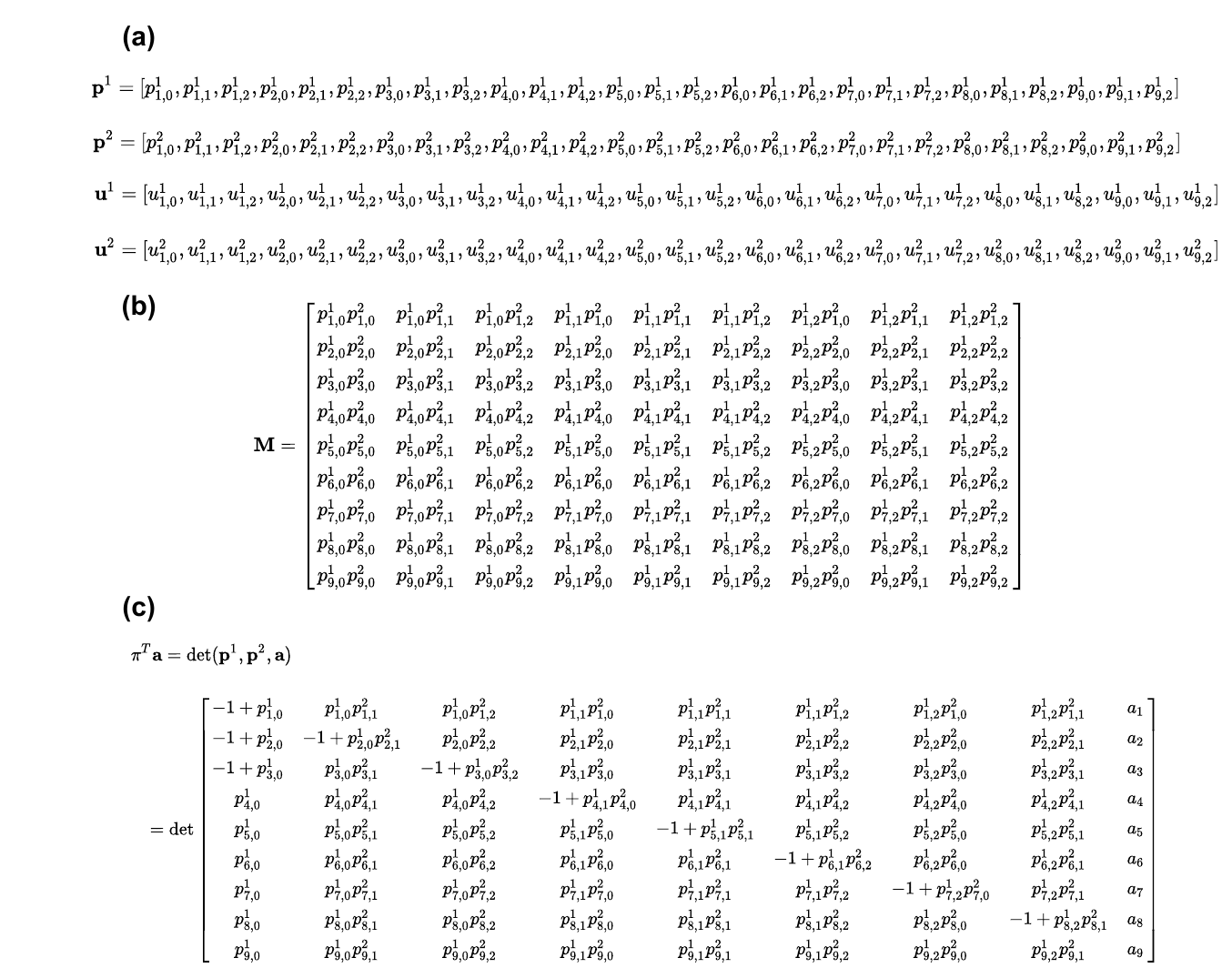}
\caption{The strategy vectors, utility vectors, Markov matrix, and determinant of $\pi^T\cdot\mathbf{a}$ after elementary transformations in the cross-silo FL game example with two organizations and three actions shown in Fig. \ref{fig:illustration}. (a) The strategy vectors and the corresponding utility vectors of the cross-silo FL game. (b) The Markov matrix $\mathbf{M}$ of the cross-silo FL game. (c) After several elementary column operations on $\det(\mathbf{p}^1,\mathbf{p}^2,\mathbf{a})$, the dot product of the stationary vector $\pi$ and an arbitrary vector $\mathbf{a}=(a_1,a_2,\dots,a_9)^T$ is equal to $\det(\mathbf{p}^1,\mathbf{p}^2,\mathbf{a})$, where the first columns $\hat{\mathbf{p}}^1$ is only controlled by organization $1$.}
\label{fig:example} 
\end{figure*}

In particular, when $\mathbf{a}=\mathbf{u}^i$, then organization $i$'s expected utility in the stationary state is:
\begin{equation}\label{eq:u_a}
E^i=\frac{\pi^T\cdot\mathbf{u}^i}{\pi^T\cdot\textbf{1}}=\frac{\det (\mathbf{p}^1,\dots,\mathbf{p}^N,\mathbf{u}^i)}{\det (\mathbf{p}^1,\dots,\mathbf{p}^N,\mathbf{1})},
\end{equation}
which makes a linear combination of all organizations’ expected utilities yielding the following equation:
\begin{equation}\label{eq:zd_equation}
\sum_{x=1}^{N}\alpha_x E^x+\alpha_0=\frac{\det (\mathbf{p}^1,\dots,\mathbf{p}^N,\sum_{x=1}^{N}\alpha_x \mathbf{u}^x+\alpha_0\mathbf{1})}{\det (\mathbf{p}^1,\dots,\mathbf{p}^N,\mathbf{1})}.
\end{equation}
In the above equation, $\alpha_0$ and $\alpha_x, x\in\mathcal{N},$ are constants for the linear combination. Moreover, after some elementary column operations on $\det(\mathbf{p}^1, \mathbf{p}^2,\dots,\mathbf{p}^N,\mathbf{a})$, in which there can be a certain column controlled by a certain organization. Fig. \ref{fig:example}(c) displays the determinant of $\pi^T\cdot\mathbf{a}$ based on the example in Fig. \ref{fig:illustration}, in which the first column is solely determined by organization $1$. We formally introduce the MMZD strategy in the cross-silo FL game below.

\begin{definition}
(MMZD strategy). In the cross-silo FL game, given the utility vectors $\mathbf{u}^x, x=1,\cdots,N$ with $r$ communication rounds, we define the MMZD strategy of organization $i$ as the strategy vector $\hat{\mathbf{p}}^i$ that satisfies the following conditions:
\begin{equation}\label{eq:zd_stratrgy}
\hat{\mathbf{p}}^i=\phi(\sum_{x=1}^{N}\alpha_x \mathbf{u}^x+\alpha_0\mathbf{1}),    
\end{equation}
where $\phi$ is a non-zero constant.
\end{definition}

With $\hat{\mathbf{p}}^i$ being under the control of organization $i$, the corresponding column of $\hat{\mathbf{p}}^i$ and the last column of $\det (\mathbf{p}^1,\dots,\mathbf{p}^N,\sum_{x=1}^{N}\alpha_x \mathbf{u}^x+\alpha_0\mathbf{1})$ will be proportional. And (\ref{eq:zd_equation}) can be converted to:
\begin{equation}\label{eq:zd_zero}
\sum_{x=1}^{N}\alpha_x E^x+\alpha_0=0.
\end{equation}
We further study the social welfare maximization problem with the help of the MMZD in this circumstance. Take organization $1$ performing the MMZD strategy as an example. According to (\ref{eq:zd_zero}), by setting $\alpha_x=1, x\in\mathcal{N}$, the social welfare can be calculated as $\sum_{x=1}^{N} E^x=-\alpha_0$. Thus, the issue of maximizing the social welfare is equivalent to the following optimization problem:
\begin{equation*}\label{eq:min_gamma}
\begin{split}
&\min \alpha_0,\\
&\,s.t.
\begin{cases}
0\le p^1_{j,g}\le 1,j\in\{1,2,...,(r+1)^N\},g\in\{0,...,r\},\\
\hat{\mathbf{p}}^1=\phi(\sum_{x=1}^{N} \mathbf{u}^x+\alpha_0\mathbf{1}),\\
\phi \neq 0.\\
\end{cases}
\end{split}
\end{equation*}

\begin{proposition}
In the cross-silo FL game, organization $1$ could maximize the social welfare by adopting the MMZD strategy $\mathbf{p}^1$. The element in $\mathbf{p}^1$ is calculated by: 
\begin{align*}
p^1_h=
\begin{cases}   
\sum_{x=1}^{N} u^x_h+{\alpha_0}_{min}+1, \\\qquad\qquad\qquad h=1,2,...,(r+1)^{N-1},\\
\sum_{x=1}^{N} u^x_h+{\alpha_0}_{min},\\\qquad\qquad\qquad h=(r+1)^{N-1}+1,...,(r+1)^{N+1},\\
\end{cases}
\end{align*}
where $p^1_h$ denotes the $h$-th element in $\mathbf{p}^1$.
\begin{proof}
We denote $u^x_k,k\in\{1,2,...,(r+1)^{N+1}\}$ as the $k$th element in $\mathbf{u}^x$, then we can solve the above optimization problem by considering the following two cases:\\
\textbf{1) $\phi>0:$}\\ 
To meet the constraint $p^1_{j,g}\ge0$, we can get the lower bound of $\alpha_0$ as follows:
\begin{align*}
& {\alpha_0}_{min}=\max(\Lambda_k),\forall{k}\in\{1,2,...,(r+1)^{N+1}\},\\
& \Lambda_k=
\begin{cases}   
-\sum_{x=1}^{N} u^x_k-\frac{1}{\phi}, k=1,2,...(r+1)^{N-1},\\
-\sum_{x=1}^{N} u^x_k, k=(r+1)^{N-1}+1,...,(r+1)^{N+1}.\\
\end{cases}
\end{align*}
To meet the constraint $p^1_{j,g}\le1$, we can get the upper bound of $\alpha_0$ as follows:
\begin{align*}
& {\alpha_0}_{max}=\min(\Lambda_l),\forall{l}\in\{(r+1)^{N+1}+1,...,2(r+1)^{N+1}\},\\
& \Lambda_l= \Lambda_{k+(r+1)^N}\\
& =
\begin{cases}
-\sum_{x=1}^{N} u^x_k, &k=1,2,...(r+1)^{N-1},\\
-\sum_{x=1}^{N} u^x_k+\frac{1}{\phi}, & k=(r+1)^{N-1}+1,...,(r+1)^{N+1}.\\
\end{cases}
\end{align*}
Only if ${\alpha_0}_{min}\le{\alpha_0}_{max}$, can $\alpha_0$ have a feasible solution, which is equivalent to $\max(\Lambda_k)\le \min(\Lambda_l),\forall{k}\in\{1,2,...,(r+1)^{N+1}\},\forall{l}\in\{(r+1)^N+1,...,2(r+1)^{N+1}\}$. If there exists $\phi>0$ satisfying the above constraint, we can obtain the minimum value of $\alpha_0$ as follow:
\begin{multline}\label{eq:case1_alphamin}
{\alpha_0}_{min}=\max\{-\sum_{x=1}^{N} u^x_1-\frac{1}{\phi},...,-\sum_{x=1}^{N} u^x_{(r+1)^{N-1}}-\frac{1}{\phi},\\
-\sum_{x=1}^{N} u^x_{(r+1)^{N-1}+1},...,-\sum_{x=1}^{N} u^x_{(r+1)^{N+1}}\}.
\end{multline}\\
\textbf{2) $\phi<0:$}\\ 
Similarly, when $p^1_{j,g}\ge0$, we have ${\alpha_0}_{min}=\max(\Lambda_l),\forall{l}\in\{(r+1)^{N+1}+1,...,2(r+1)^{N+1}\}$; considering $p^1_{j,g}\le1$, we have ${\alpha_0}_{max}=\min(\Lambda_k),\forall{k}\in\{1,2,...,(r+1)^{N+1}\}$. In addition, $\alpha_0$ is feasible only when ${\alpha_0}_{min}\le{\alpha_0}_{max}$, i.e., $\max(\Lambda_l)\le \min(\Lambda_k),\forall{k}\in\{1,2,...,(r+1)^{N+1}\},\forall{l}\in\{(r+1)^{N+1}+1,...,2(r+1)^{N+1}\}$. Finally, we can get the following result:
\begin{multline}\label{eq:case2_alphamin}
{\alpha_0}_{min}=\max\{-\sum_{x=1}^{N} u^x_1,...,-\sum_{x=1}^{N} u^x_{(r+1)^{N-1}},\\
-\sum_{x=1}^{N} u^x_{(r+1)^{N-1}+1}+\frac{1}{\phi},...,-\sum_{x=1}^{N} u^x_{(r+1)^{N+1}}+\frac{1}{\phi}\}.
\end{multline}
In summary, by (\ref{eq:case1_alphamin}) and (\ref{eq:case2_alphamin}), organization $1$ can unilaterally set the expected social welfare $\sum_{x=1}^{N} E^x$ with the MMZD strategy $\mathbf{p}^1$ meeting $\hat{\mathbf{p}}^1=\phi(\sum_{x=1}^{N} \mathbf{u}^x+\alpha_0\mathbf{1})$, with each element of $\mathbf{p}^1$ calculated by:
\begin{align*}\label{eq:p_i_calculate2}
p^1_h=
\begin{cases}   
\sum_{x=1}^{N} u^x_h+{\alpha_0}_{min}+1, \\\qquad\qquad\qquad h=1,2,...,(r+1)^{N-1},\\
\sum_{x=1}^{N} u^x_h+{\alpha_0}_{min},\\\qquad\qquad\qquad h=(r+1)^{N-1}+1,...,(r+1)^{N+1},\\
\end{cases}
\end{align*}
where $p^1_h$ denotes the $h$-th element in $\mathbf{p}^1$.
\end{proof}
\end{proposition}

\section{Social welfare maximization by MMZD alliance}
In the previous section, we have proved that by adopting MMZD strategies, individual organizations can maximize social welfare in the cross-silo FL game. 
According to (\ref{eq:zd_stratrgy}), however, one can see that every organization can deploy the MMZD strategy to control the social welfare. Therefore, it is possible that multiple organizations utilize the MMZD strategy at the same time. Then how will this affect the control of the social welfare maximization in the cross-silo FL game? Specifically, will it lead to stronger control and further enlarge the maximum social welfare value? 

In this section, we explore the case of multiple organizations playing the MMZD strategy to form an alliance in the cross-silo FL game. We call them MMZD Alliance (MMZDA) organizations (denoted as $\mathcal{A}$), assuming that all MMZDA organizations use the same MMZD strategy to prevent the low social welfare. Besides, we define the other organizations as outsider organizations (denoted as $\mathcal{N}\backslash\mathcal{A}$) that may inactively participate in communication rounds and just want to get model trained by other organizations for free.

For the sake of convenience, we assume that organization $i\in\mathcal{A},|\mathcal{A}|=N^{\mathcal{A}}$ is an alliance, which performs the same MMZD strategy as the leader organization $a\in\mathcal{A}$. As for an outsider organization $j\in\mathcal{N}\backslash\mathcal{A}, |\mathcal{N}\backslash\mathcal{A}|=N-N^{\mathcal{A}}$, it may not participate in any communication round. 

In this new scenario based on the MMZDA, we pay more attention to strategies and behaviors rather than the organizations themselves. Since the MMZDA members take the same actions, we treat them as an entity represented by the leader organization $a$. In the alliance, organizations maintain strategy consistency through certain information exchange and communication, which is summarized in Algorithm \ref{algo:MMZDA}. At the beginning of each game round, other organizations $j\in\mathcal{A}\backslash \{a\}$ send their utility functions ${U}^j$ to the leader organization $a$ (line \ref{algo:line1}). Then, the leader organization $a$ calculates the utility vector $\mathbf{v}^i$ by these utility functions (line \ref{algo:line2}). Next, the leader organization $a$ solves the optimization problem (\ref{eq:min_gamma_2}) to obtain the corresponding strategy $\hat{\mathbf{q}}^a$ of MMZDA (line \ref{algo:find}), which is then sent to the other organizations in the alliance (line \ref{algo:line4}). Upon receiving the strategy $\hat{\mathbf{q}}^a$, the other organizations adopt it as the strategy for the current game round and implement it (line \ref{algo:line5}).

\begin{algorithm}
\caption{MMZDA}\label{algo:MMZDA}
\LinesNumbered
\raggedright
\SetKwInOut{Input}{Input}
\SetKwInOut{Output}{Output}
\textbf{Input: } total  global communication round $r$, the utility functions ${U}^j(\mathbf{y})$ of all organizations
\\

\nonl \textbf{Other organizations $j\in\mathcal{A}\backslash \{a\}$: }\\
\text{   send the utility function} ${U}^j$ \text{to the leader organization $a$}\\ \label{algo:line1}
\nonl \textbf{Leader organization $a$: }\\
\text{   generate} $\mathbf{v}^a$ \text{from} $\sum_{j\in\mathcal{A}} {U}^j(\mathbf{y})$\\ \label{algo:line2}
\text{   calculate} $\hat{\mathbf{q}}^a$ \text{by solving optimization problem (\ref{eq:min_gamma_2})}\\ \label{algo:find}
\text{   send} $\hat{\mathbf{q}}^a$ \text{to other organizations in }$\mathcal{A}$\\ \label{algo:line4}
\nonl \textbf{Other organizations $j\in\mathcal{A}\backslash \{a\}$: }\\
\text{   implement strategy} $\hat{\mathbf{p}}^j \gets\hat{\mathbf{q}}^a$\\ \label{algo:line5}

\end{algorithm}

In each game round, any outsider organization or alliance leader (organization $a$) can choose the action $y_i\in\{0,1,...,r\}$, so there are $(r+1)^{N-N^{\mathcal{A}}}$ possible outcomes for each game round. We assume that the organizations have one-round-memory. And we define $\mathcal{N}^c=\mathcal{N}\backslash\mathcal{A}\cup\{a\}$ as the players of the cross-silo FL game based on MMZDA. For arbitrary organization $i\in\mathcal{N}^c$, its mixed strategy $\mathbf{q}^i$ is defined as:
\begin{equation}
\mathbf{q}^i=[q^i_{1,0},q^i_{1,1},...,q^i_{1,r},q^i_{2,0},...,q^i_{j,g},...,q^i_{(r+1)^{N-N^{\mathcal{A}}+1},r}]^T,
\end{equation}
where $q^i_{j,g} (j\in\{1,2,...,(r+1)^{N-N^{\mathcal{A}}+1}\},g\in\{0,1,...,r\})$ represents the probability of organization $i$ choosing action $y_i=g$ in the current game round conditioning on the $j$-th outcome of the previous game round. In addition, the corresponding utility vector $\mathbf{v}^i$ is denoted as:
\begin{equation}
\mathbf{v}^i=[v^i_{1,0},v^i_{1,1},...,v^i_{1,r},v^i_{2,0},...,v^i_{j,g},...,v^i_{(r+1)^{N-N^{\mathcal{A}}+1},r}]^T,
\end{equation}
where each utility $v^i_{j,g}$ of organization $i$ choosing action $y_i=g$ in the $j$-th outcome can be calculated by $v^i_{j,g}=U^i(\mathbf{y}^{(j,g)})$, with $\mathbf{y}^{(j,g)}$ denoting the action vector $\mathbf{y}$ corresponding to the $j$-th outcome but $y_i=g$. Moreover, if $i=a$, then $v^a_{j,g}=\sum_{x\in{\mathcal{A}}}U^x(\mathbf{y}^{(j,g)})$. In Section \ref{maxwelfare}, we perform that the linear combination of all organizations' expected utilities can be represented as (\ref{eq:u_a}) and (\ref{eq:zd_equation}).
Similarly, we can draw that
\begin{multline}
\label{eq:zd_equation_a}
\sum_{x\in{\mathcal{N}^c}}\gamma_x F^x+\gamma_0\\
=\frac{\det (\mathbf{q}^1,...,\mathbf{q}^i,...,\mathbf{q}^{N-N^{\mathcal{A}}+1},\sum_{x\in{\mathcal{N}^c}}\gamma_x \mathbf{v}^x+\gamma_0\mathbf{1})}{\det (\mathbf{q}^1,...,\mathbf{q}^i,...,\mathbf{q}^{N-N^{\mathcal{A}}+1},\mathbf{1})},
\end{multline}
where $\gamma_0\in\mathbb{R}$ as well as $\gamma_x\in\mathbb{R},x\in{\mathcal{N}^c}$ are constants, and $F^x$ denotes the expected utility of organization $x$.. Thus, when organization $i$ chooses a strategy that satisfies $\hat{\mathbf{q}}^i=\phi(\sum_{x\in{\mathcal{N}^c}}\gamma_x \mathbf{v}^x+\gamma_0\mathbf{1})$, where $\phi$ is a non-zero constant and $\hat{\mathbf{q}}^i$ is under the control of organization $i$, the column related to $\hat{\mathbf{q}}^i$ and the last column of $\det (\mathbf{q}^1,...,\mathbf{q}^i,...,\mathbf{q}^N,\mathbf{v}^i)$ will be proportional. Then (\ref{eq:zd_equation}) can be converted to:
\begin{equation}\label{eq:zd_zero_a}
\sum_{x\in{\mathcal{N}^c}}\gamma_x F^x+\gamma_0=0.
\end{equation}

In order to investigate the social welfare maximization problem by MMZDA, we rewrite the optimization problem as:
\begin{equation}\label{eq:min_gamma_2}
\begin{split}
&\min \gamma_0,\\
&\,s.t.
\begin{cases}
0\le q^a_{j,g}\le 1,\\
j\in\{1,2,...,(r+1)^{N-N^{\mathcal{A}}+1}\},g\in\{0,...,r\},\\
\hat{\mathbf{q}}^a=\phi(\sum_{x\in{\mathcal{N}^c}} \mathbf{v}^x+\gamma_0\mathbf{1}),\\
\phi \neq 0.\\
\end{cases}
\end{split}
\end{equation}

\begin{proposition}
In the cross-silo FL game, an MMZD alliance $\mathcal{A}$ with the leader organization $a$ could maximize the social welfare by adopting the strategy $\mathbf{q}^a$. The element in $\mathbf{q}^a$ is calculated by:
\begin{align*}
q^a_h=
\begin{cases}   
\sum_{x=1}^{N} v^x_h+{\gamma_0}_{min}+1,\\\qquad\quad h=1,2,...,(r+1)^{N-N^{\mathcal{A}}},\\
\sum_{x=1}^{N} v^x_h+{\gamma_0}_{min},\\\qquad\quad h=(r+1)^{N-N^{\mathcal{A}}}+1,...,(r+1)^{N-N^{\mathcal{A}}+2},\\
\end{cases}
\end{align*}
where $q^a_h$ denotes the $h$-th element in $\mathbf{q}^a$.
\begin{proof}
Similar to Section \ref{maxwelfare}, we denote $v^x_k,k\in\{1,2,...,(r+1)^{N-N^{\mathcal{A}}+2}\}$ as the $k$th element in $\mathbf{v}^x$, then we can solve this optimization problem by discussing these two situations:\\
\textbf{1)} $\phi>0:$\\
When $q^a_{j,g}\ge0$, we have ${\gamma_0}_{min}=\max(\Lambda_l),\forall{l}\in\{(r+1)^{N-N^{\mathcal{A}}+2}+1,...,2(r+1)^{N-N^{\mathcal{A}}+2}\}$; given $q^a_{j,g}\le1$, we have ${\gamma_0}_{max}=\min(\Lambda_k),\forall{k}\in\{1,2,...,(r+1)^{N-N^{\mathcal{A}}+2}\}$. In addition, $\gamma_0$ is feasible only when ${\gamma_0}_{min}\le{\gamma_0}_{max}$, i.e., $\max(\Lambda_l)\le \min(\Lambda_k),\forall{k}\in\{1,2,...,(r+1)^{N-N^{\mathcal{A}}+2}\},\forall{l}\in\{(r+1)^{N-N^{\mathcal{A}}+2}+1,...,2(r+1)^{N-N^{\mathcal{A}}+2}\}$. Finally, we can get the following result:
\begin{multline}\label{eq:case1_gammamin}
{\gamma_0}_{min}=\max\{-\sum_{x\in{\mathcal{N}^c}} v^x_1-\frac{1}{\phi},...,-\sum_{x\in{\mathcal{N}^c}} v^x_{(r+1)^{N-N^{\mathcal{A}}}}-\frac{1}{\phi},\\
-\sum_{x\in{\mathcal{N}^c}} v^x_{(r+1)^{N-N^{\mathcal{A}}}+1},...,-\sum_{x\in{\mathcal{N}^c}} v^x_{(r+1)^{N-N^{\mathcal{A}}+2}}\}.
\end{multline}
\textbf{2)} $\phi<0:$\\
Given $q^a_{j,g}\ge0$, we have ${\gamma_0}_{min}=\max(\Lambda_l),\forall{l}\in\{(r+1)^{N-N^{\mathcal{A}}+2}+1,...,2(r+1)^{N-N^{\mathcal{A}}+2}\}$; while $q^a_{j,g}\le1$, we can get ${\gamma_0}_{max}=\min(\Lambda_k),\forall{k}\in\{1,2,...,(r+1)^{N-N^{\mathcal{A}}+2}\}$. In addition, $\gamma_0$ is feasible only when ${\gamma_0}_{min}\le{\gamma_0}_{max}$, i.e., $\max(\Lambda_l)\le \min(\Lambda_k),\forall{k}\in\{1,2,...,(r+1)^{N-N^{\mathcal{A}}+2}\},\forall{l}\in\{(r+1)^{N-N^{\mathcal{A}}+2}+1,...,2(r+1)^{N-N^{\mathcal{A}}+2}\}$. Finally, we can get the following result:
\begin{multline}\label{eq:case2_gammamin}
{\gamma_0}_{min}=\max\{-\sum_{x\in{\mathcal{N}^c}} v^x_1,...,-\sum_{x\in{\mathcal{N}^c}} v^x_{(r+1)^{N-N^{\mathcal{A}}}},\\
-\sum_{x\in{\mathcal{N}^c}} v^x_{(r+1)^{N-N^{\mathcal{A}}}+1}+\frac{1}{\phi},...,-\sum_{x\in{\mathcal{N}^c}} v^x_{(r+1)^{N-N^{\mathcal{A}}+2}}+\frac{1}{\phi}\}.
\end{multline}
In summary, by (\ref{eq:case1_gammamin}) and (\ref{eq:case2_gammamin}), the alliance can unilaterally set the expected social welfare $\sum_{x\in{\mathcal{N}^c}} E^x$ with the MMZD strategy $\mathbf{q}^a$ meeting $\hat{\mathbf{q}}^a=\phi(\sum_{x\in{\mathcal{N}^c}} \mathbf{v}^x+\gamma_0\mathbf{1})$, with each element of $\mathbf{q}^a$ calculated by:
\begin{align*}\label{eq:q_i_calculate}
q^a_h=
\begin{cases}   
\sum_{x=1}^{N} v^x_h+{\gamma_0}_{min}+1,\\\qquad\quad h=1,2,...,(r+1)^{N-N^{\mathcal{A}}},\\
\sum_{x=1}^{N} v^x_h+{\gamma_0}_{min},\\\qquad\quad h=(r+1)^{N-N^{\mathcal{A}}}+1,...,(r+1)^{N-N^{\mathcal{A}}+2},\\
\end{cases}
\end{align*}
where $q^a_h$ denotes the $h$-th element in $\mathbf{q}^a$.
\end{proof}
\end{proposition}
Moreover, we further discover that the maximum social welfare under MMZDA is larger than that controlled by single MMZD organization.

\begin{theorem}\label{th:MMZDalliance}
In the cross-silo FL game, the social welfare can achieve larger maximum value by MMZDA than that by single MMZD organization.
\end{theorem}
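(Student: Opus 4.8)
The plan is to reduce the claim to a comparison of the two optimization problems just solved. Write ${\alpha_0}_{\min}(\phi)$ for the quantity in~(\ref{eq:case1_alphamin}) when $\phi>0$ and in~(\ref{eq:case2_alphamin}) when $\phi<0$, and ${\gamma_0}_{\min}(\phi)$ for its counterpart in~(\ref{eq:case1_gammamin})--(\ref{eq:case2_gammamin}). Then the largest social welfare a single MMZD organization can unilaterally enforce is $W_{\mathrm{s}}=-\min_{\phi}{\alpha_0}_{\min}(\phi)$ and the largest the MMZDA can enforce is $W_{\mathrm{a}}=-\min_{\phi}{\gamma_0}_{\min}(\phi)$, both minima taken over those $\phi\neq 0$ for which the associated ZD strategy has all entries in $[0,1]$. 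Hence it suffices to prove $\min_{\phi}{\gamma_0}_{\min}(\phi)\le\min_{\phi}{\alpha_0}_{\min}(\phi)$.

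First I would unpack the terms in the MMZDA bounds. Setting $\gamma_x=1$ for all $x\in\mathcal{N}^{c}$ and recalling $v^{a}_{j,g}=\sum_{x\in\mathcal{A}}U^{x}(\mathbf{y}^{(j,g)})$, the quantity entering~(\ref{eq:case1_gammamin})--(\ref{eq:case2_gammamin}) for the $k$-th outcome of the MMZDA game is $\sum_{x\in\mathcal{N}^{c}}v^{x}_{k}=\sum_{x\in\mathcal{A}}U^{x}(\mathbf{y}^{(k)})+\sum_{x\in\mathcal{N}\setminus\mathcal{A}}U^{x}(\mathbf{y}^{(k)})=\sum_{x\in\mathcal{N}}U^{x}(\mathbf{y}^{(k)})$, i.e.\ the ordinary social welfare of the original game evaluated at the action vector $\mathbf{y}^{(k)}$ in which every alliance member copies the leader's action. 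As $k$ runs over the $(r+1)^{N-N^{\mathcal{A}}+1}$ MMZDA outcomes, $\mathbf{y}^{(k)}$ runs over precisely the ``diagonal'' action vectors $\{\mathbf{y}:y_{i}=y_{a}\ \text{for all}\ i\in\mathcal{A}\}$, which (once $N^{\mathcal{A}}\ge 2$) is a proper subset of the $(r+1)^{N}$ action vectors over which the analogous sums $\sum_{x\in\mathcal{N}}u^{x}_{k}$ range in the single-MMZD problem; the uniform $(r+1)$-fold multiplicity with which each action vector appears is the same on both sides and irrelevant to the maxima.

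Given this identification, the optimization producing ${\gamma_0}_{\min}(\phi)$ is exactly the one producing ${\alpha_0}_{\min}(\phi)$ with its list of candidate terms $-\sum_{x}u^{x}_{k}$ and $-\sum_{x}u^{x}_{k}-\frac{1}{\phi}$ (for $\phi>0$; the other sign is symmetric) restricted to the diagonal sublist. Since a maximum over a sublist cannot exceed the maximum over the full list, ${\gamma_0}_{\min}(\phi)\le{\alpha_0}_{\min}(\phi)$ for every fixed $\phi$; the same observation applied to the upper bounds gives ${\gamma_0}_{\max}(\phi)\ge{\alpha_0}_{\max}(\phi)$, so any $\phi$ feasible for the single MMZD (${\alpha_0}_{\min}(\phi)\le{\alpha_0}_{\max}(\phi)$) is also feasible for the MMZDA and the minimum defining $W_{\mathrm{a}}$ is attained. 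Picking $\phi^{\star}$ that minimizes ${\alpha_0}_{\min}(\cdot)$ then gives $\min_{\phi}{\gamma_0}_{\min}(\phi)\le{\gamma_0}_{\min}(\phi^{\star})\le{\alpha_0}_{\min}(\phi^{\star})=\min_{\phi}{\alpha_0}_{\min}(\phi)$, i.e.\ $W_{\mathrm{a}}\ge W_{\mathrm{s}}$; and if the welfare-minimizing outcome of the original game is off the diagonal --- the generic case for $N^{\mathcal{A}}\ge 2$ --- the restricted maximum is strictly smaller, so $W_{\mathrm{a}}>W_{\mathrm{s}}$.

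The step I expect to be the real work is the bookkeeping behind the previous paragraph: under the outcome ordering used to derive~(\ref{eq:case1_alphamin})--(\ref{eq:case2_gammamin}), the block of $(r+1)^{N-N^{\mathcal{A}}}$ positions that carry the $\pm\frac{1}{\phi}$ offset in the MMZDA problem must be shown to embed into the block of $(r+1)^{N-1}$ such positions in the single-MMZD problem (both are the same $(r+1)^{-2}$ fraction of all positions), so that the two $\phi$-parametrized lists are genuinely comparable term by term; and one must check that the set of admissible $\phi$ can only grow when passing to the alliance, so that $W_{\mathrm{a}}$ is actually realized and not merely a supremum. The rest is direct substitution into~(\ref{eq:case1_alphamin})--(\ref{eq:case2_gammamin}).
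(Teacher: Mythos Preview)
Your proposal is correct and follows essentially the same route as the paper: both arguments identify the MMZDA outcome list with the ``diagonal'' action vectors $\{\mathbf{y}:y_i=y_a\ \text{for all}\ i\in\mathcal{A}\}$, observe that the resulting candidate set $X_2$ for ${\gamma_0}_{\min}$ is a subset of the candidate set $X_1$ for ${\alpha_0}_{\min}$ (with the $\pm\frac{1}{\phi}$ blocks matching up), and conclude by the trivial fact that a maximum over a sublist cannot exceed the maximum over the full list. Your version is somewhat more careful than the paper's in that you explicitly check that every $\phi$ feasible for the single MMZD remains feasible for the alliance and note when the inequality is strict, points the paper leaves implicit.
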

\begin{proof}
In the cross-silo FL game, we can achieve a maximum social welfare $\sum_{x=1}^{N} E^x=-\alpha_0$ using MMZD strategy by an individual organization. Further, we are able to draw another maximum social welfare $\sum_{x\in{\mathcal{N}^c}}\gamma_x F^x=-\gamma_0$ by MMZDA with studying the following two cases:\\
\textbf{1) $\phi>0:$}\\ 
In this case, we have:
\begin{align*}
&{\alpha_0}_{min}=\max\{-\sum_{x=1}^{N} u^x_1-\frac{1}{\phi},...,-\sum_{x=1}^{N} u^x_{(r+1)^{N-1}}-\frac{1}{\phi},\\
&-\sum_{x=1}^{N} u^x_{(r+1)^{N-1}+1},...,-\sum_{x=1}^{N} u^x_{(r+1)^{N+1}}\}.\\
&{\gamma_0}_{min}=\max\{-\sum_{x\in{\mathcal{N}^c}} v^x_1-\frac{1}{\phi},...,-\sum_{x\in{\mathcal{N}^c}} v^x_{(r+1)^{N-N^{\mathcal{A}}}}-\frac{1}{\phi},\\
&-\sum_{x\in{\mathcal{N}^c}} v^x_{(r+1)^{N-N^{\mathcal{A}}}+1},...,-\sum_{x\in{\mathcal{N}^c}} v^x_{(r+1)^{N-N^{\mathcal{A}}+2}}\}.
\end{align*}
The value of ${\alpha_0}_{min}$ is the maximum value in the certain $(r+1)^{N+1}$ values, and we denote these candidate values as set $X_1$. While the value of ${\gamma_0}_{min}$ generating from the $(r+1)^{N-N^{\mathcal{A}}+2}$ values, we denote this candidate values as set $X_2$. Note that the $(r+1)^{N+1}$ values cover all possible outcomes, but the $(r+1)^{N-N^{\mathcal{A}}+2}$ values do not cover all outcomes, since the MMZDA organizations employ the same strategy. Given a certain element like $-\sum_{x\in{\mathcal{N}^c}} v^x_{k}-\frac{1}{\phi} \in X_2, k\in \{1,2,...,N-N^{\mathcal{A}}\}$, as $v^a_{j,g}=\sum_{x\in{\mathcal{A}}}U^x(\mathbf{y}^{(j,g)})$, we have:
\begin{multline}
-\sum_{x\in{\mathcal{N}^c}} v^x_{k}-\frac{1}{\phi}=\sum_{x\in{\mathcal{A}}}U^x(\mathbf{y}^{(j,g)})+-\sum_{x\in{\mathcal{N}^c}} v^x_{k}-\frac{1}{\phi}\\
=\sum^{N}_{x=1}U^x(\mathbf{y}^{(j\times(r+1)^{N^{\mathcal{A}-1}}-1,g)})-\frac{1}{\phi}
=-\sum_{x=1}^{N} u^x_{k'}-\frac{1}{\phi},
\end{multline}
where $k'=(r+1)^{N^{\mathcal{A}+k-1}}$.
Otherwise, given an element like $-\sum_{x\in{\mathcal{N}^c}} v^x_{k} \in X_2, k\in \{N-N^{\mathcal{A}}+1,...,N-N^{\mathcal{A}}+2\}$, we can draw:
\begin{align*}
&-\sum_{x\in{\mathcal{N}^c}} v^x_{k}=\sum_{x\in{\mathcal{A}}}U^x(\mathbf{y}^{(j,g)})+-\sum_{x\in{\mathcal{N}^c}} v^x_{k}\\
&=\sum^{N}_{x=1}U^x(\mathbf{y}^{(j-(r+1)^{N-N^{\mathcal{A}-1}})\times(r+1)^{N^{\mathcal{A}-1}}+(r+1)^{N-1},g)})
\\&=-\sum_{x=1}^{N} u^x_{k'},
(k'=(r+1)^{N^{\mathcal{A}+(k-N+N^{\mathcal{A}})-1}}+(r+1)^{N-1})
\end{align*}
As deduced above, the elements in $X_2$ are all in $X_1$, so $X_2$ is a subset of $X_1$. Thus, ${\alpha_0}_{min}\geq {\gamma_0}_{min} $ holds true.\\
\textbf{2) $\phi<0:$}\\ 
When $\phi<0$, the situation is quite similar. 
\begin{align*}
&{\alpha_0}_{min}=\max\{-\sum_{x=1}^{N} u^x_1,...,-\sum_{x=1}^{N} u^x_{(r+1)^{N-1}},\\
&-\sum_{x=1}^{N} u^x_{(r+1)^{N-1}+1}+\frac{1}{\phi},...,-\sum_{x=1}^{N} u^x_{(r+1)^{N+1}}+\frac{1}{\phi}\}.\\
&{\gamma_0}_{min}=\max\{-\sum_{x\in{\mathcal{N}^c}} v^x_1,...,-\sum_{x\in{\mathcal{N}^c}} v^x_{(r+1)^{N-N^{\mathcal{A}}}},\\
&-\sum_{x\in{\mathcal{N}^c}} v^x_{(r+1)^{N-N^{\mathcal{A}}}+1}+\frac{1}{\phi},...,-\sum_{x\in{\mathcal{N}^c}} v^x_{(r+1)^{N-N^{\mathcal{A}}+2}}+\frac{1}{\phi}\}.
\end{align*}
We denote the $(r+1)^{N+1}$ candidate values of ${\alpha_0}_{min}$ as $X_3$. While the value of ${\gamma_0}_{min}$ generating from the $(r+1)^{N-N^{\mathcal{A}}+2}$ values, we denote this candidate values as $X_4$. We can prove that each value of $X_4$ can be found in $X_3$ as above. So ${\alpha_0}_{min}\geq {\gamma_0}_{min} $ holds true as well. Then, we can draw a conclusion that ${\alpha_0}_{min}\geq {\gamma_0}_{min} $, which is equivalent to ${-\alpha_0}_{max}\leq {-\gamma_0}_{max}$. 

Note that ${-\alpha_0}_{max}$ denotes the maximum social welfare controlled by a single organization using the MMZD strategy, and ${-\gamma_0}_{max}$ represents the maximum social welfare achieved by MMZDA. So the social welfare can achieve larger maximization by MMZDA than single MMZD organization, and the proof is concluded.
\end{proof}
The above analysis proves that MMZDA further enhances the ability of non-selfish organizations to maximize the social welfare. If more organizations join the MMZDA, the upper bound of the controllable social welfare increases. 

\section{Experimental Evaluation}
\subsection{Experiment Settings}
\subsubsection{Environment}
In this section, we present the experimental results of our study for the MMZD individual (MMZD) strategy and MMZD Alliance (MMZDA) stategy in social welfare maximization. Generally, all experiments are implemented using Matlab R2021a on a laptop with 2.3 GHz Intel Core i5-8300H processor. In all experiments except for otherwise specification, we set $K=200$, $r=33$, and initial $\phi=0.01$. We adopt the Convolutional Neural Network model on the MNIST, MNIST-Fashion and CIFAR-10 datasets as the global model in cross-silo FL for conducting experiments, among which MNIST is our default dataset. Parameters $\theta_0=23271.584$ and $\theta_1=50193.243$ are derived based on the simulation dataset \cite{li2019convergence}. For every control group with different strategy settings, we repeat the above experiments 100 times, and take the average value as the final expected social welfare. 
\subsubsection{Datasets}
\begin{itemize}
    \item \textbf{MNIST} contains 60,000 data samples, while the testing set has 10,000 data samples. These samples are black and white images of handwritten digits from 0 to 9, with a size of 28$\times$28 pixels.
    \item \textbf{MNIST-Fashion} is a dataset that consists of images depicting various clothing and accessories like shirts, pants, and shoes. It comprises 60,000 training images and 10,000 test images, where each image is associated with one of the 10 different clothing categories. The images in this dataset are also 28$\times$28 black and white.
    \item \textbf{CIFAR-10} has 60,000 color images in total, each having a size of 32$\times$32 pixels. The images are categorized into 10 classes, including airplane, automobile, bird, cat, deer, dog, frog, horse, ship, and truck. Each class contains 6,000 images, resulting in 50,000 training images and 10,000 test images.
\end{itemize}

\subsection{Evaluation of the MMZD Strategy}
First, we evaluate the performance of the MMZD strategy used by individual organization to maximize the social welfare based on simulation experiments. We set $N=10$ since the number of organizations in cross-silo FL is usually small.
\begin{figure}[h]
\centering
  \centering
  \subfigure[MMZD]{
    \label{fig:subfig:a}
    \includegraphics[scale=0.281]{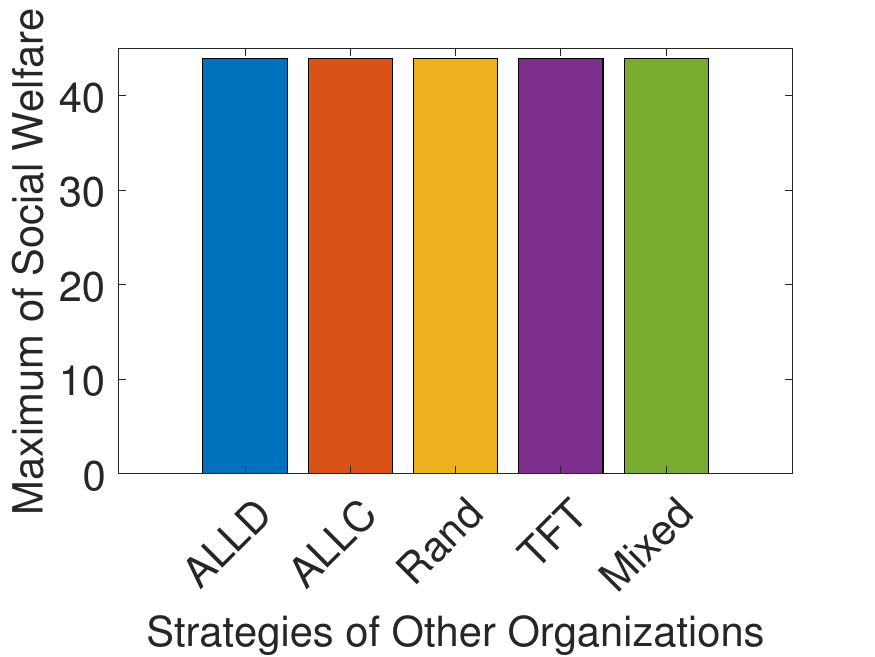}}
  \subfigure[ALLD]{
    \label{fig:subfig:b}
    \includegraphics[scale=0.281]{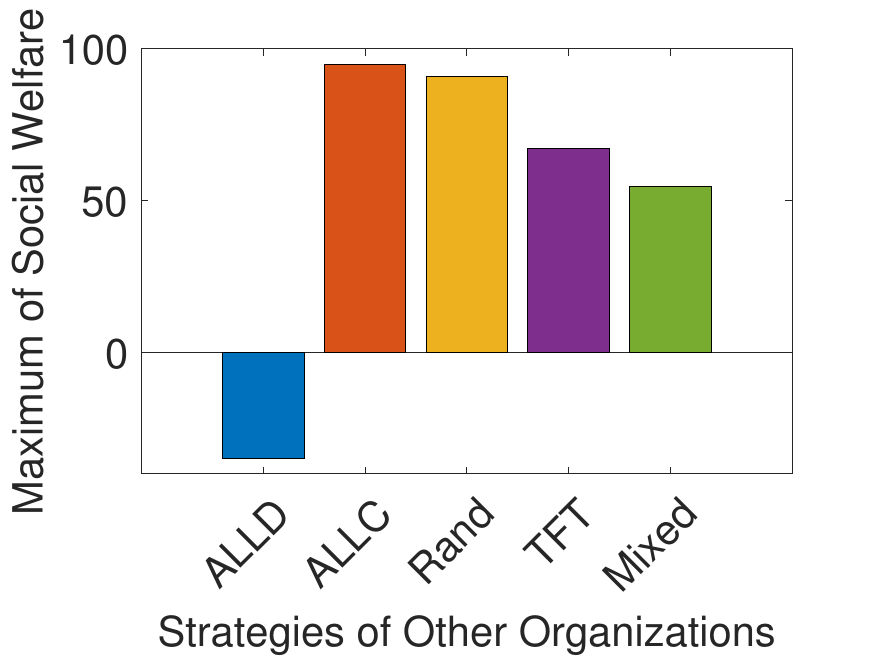}}
  \subfigure[ALLC]{
    \label{fig:subfig:c}
    \includegraphics[scale=0.281]{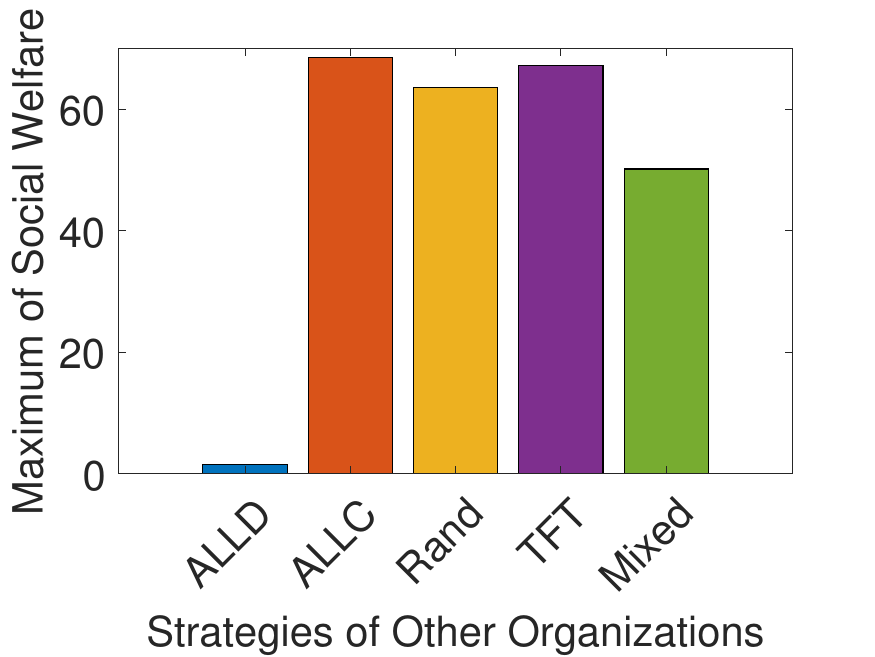}}
  \subfigure[Rand]{
    \label{fig:subfig:d}
    \includegraphics[scale=0.281]{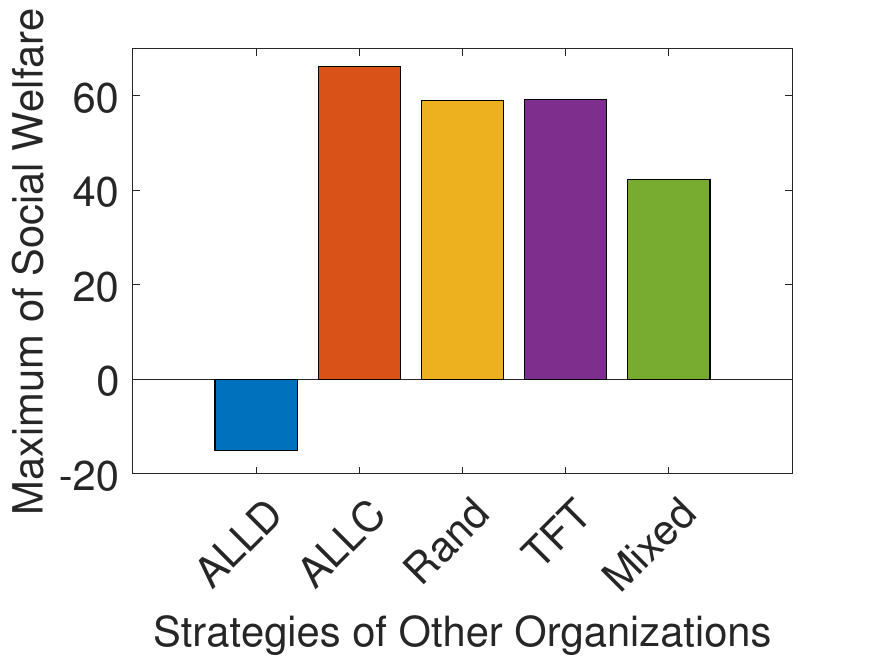}}
\caption{The maximum of the social welfare under different strategy combinations of organization $1$ and others on MNIST.}
  \label{fig:strategies} 
\end{figure}

\begin{figure}[h]
\centering
  \centering
  \subfigure[MMZD]{
    \label{fig:subfig:f-a}
    \includegraphics[scale=0.24]{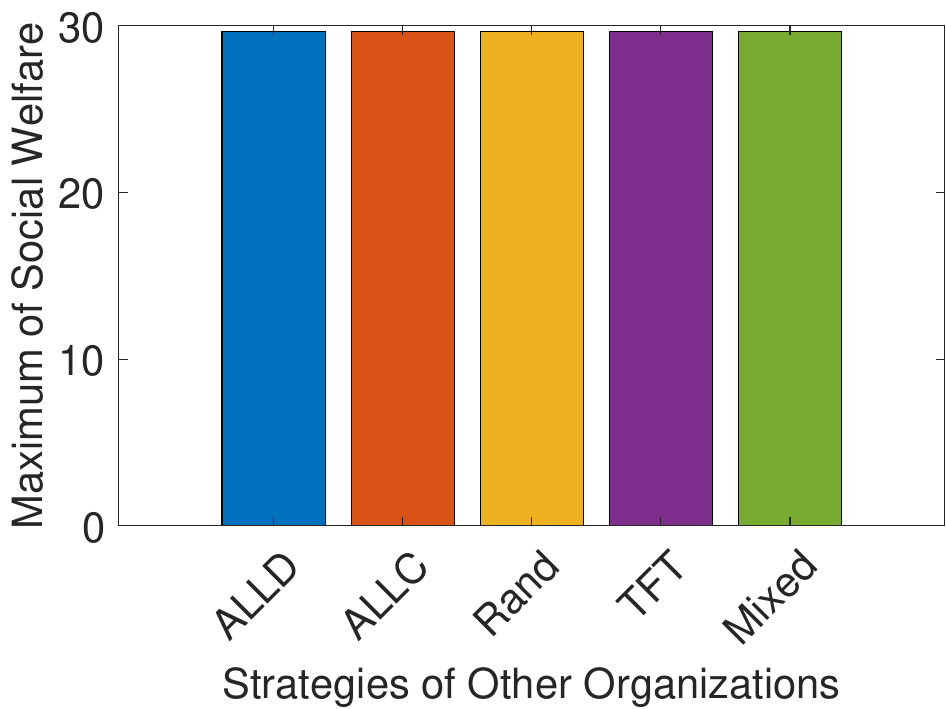}}
  \subfigure[ALLD]{
    \label{fig:subfig:f-b}
    \includegraphics[scale=0.24]{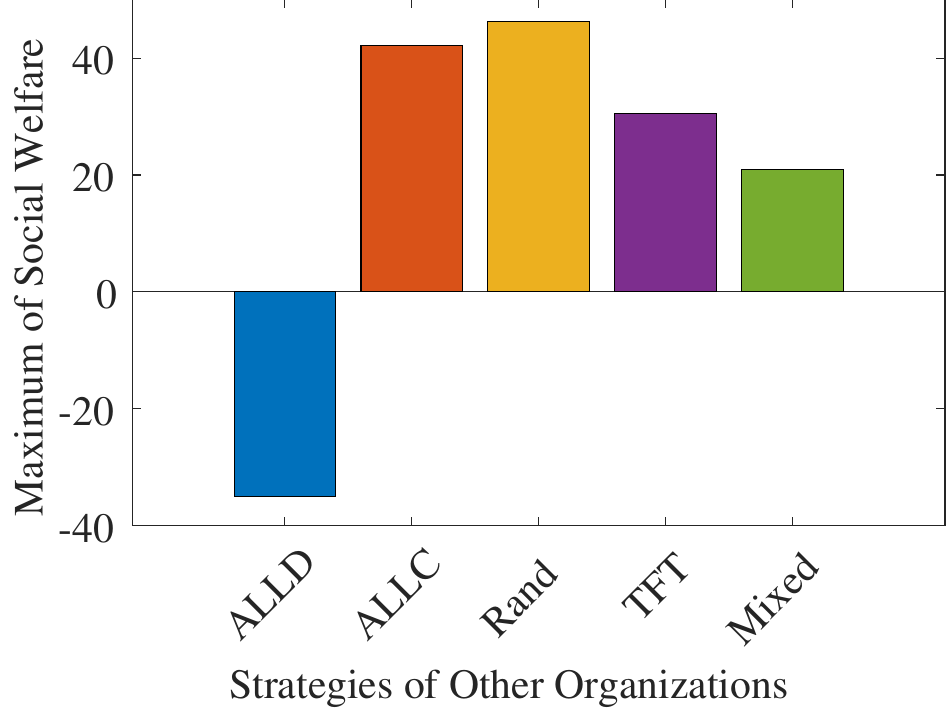}}
  \subfigure[ALLC]{
    \label{fig:subfig:f-c}
    \includegraphics[scale=0.24]{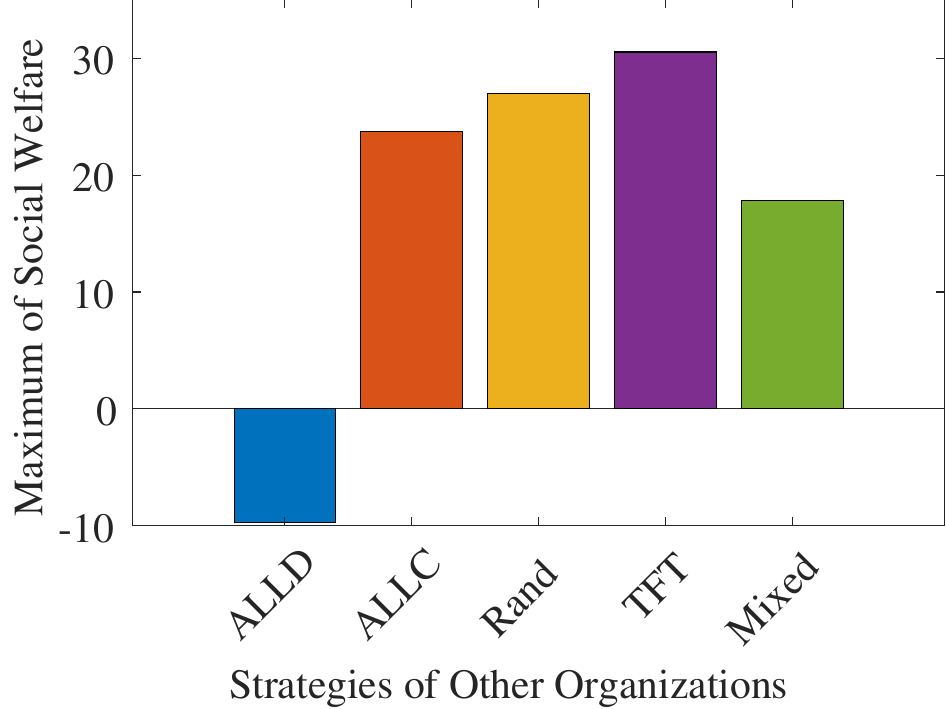}}
  \subfigure[Rand]{
    \label{fig:subfig:f-d}
    \includegraphics[scale=0.24]{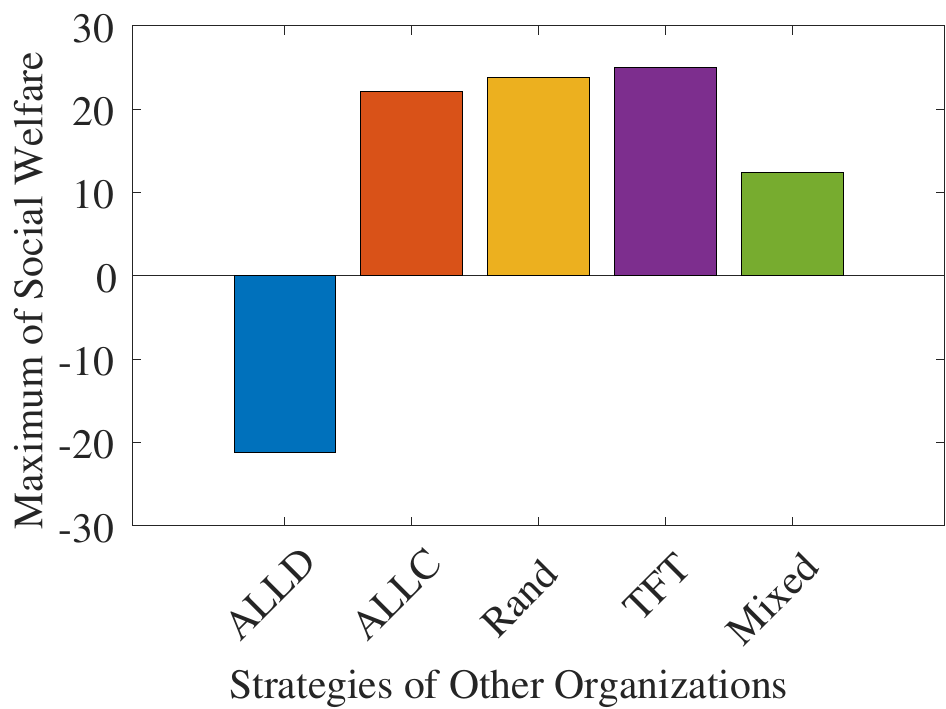}}
\caption{The maximum of the social welfare under different strategy combinations of organization $1$ and others on MNIST-Fashion.}
  \label{fig:MNIST-F} 
\end{figure}

\begin{figure}[h]
\centering
  \centering
  \subfigure[MMZD]{
    \label{fig:subfig:c-a}
    \includegraphics[scale=0.24]{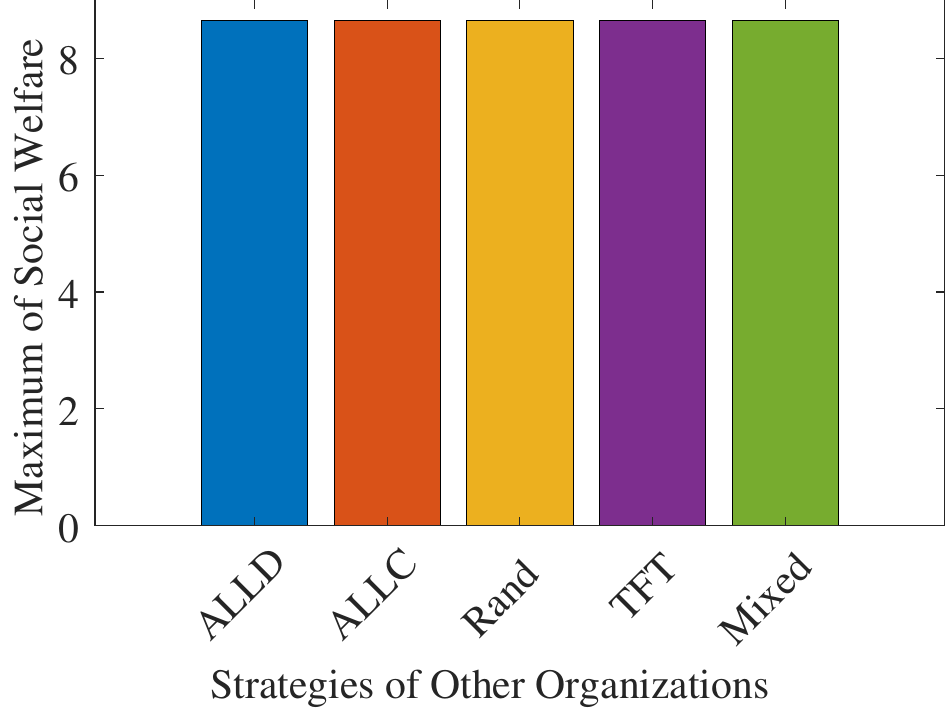}}
  \subfigure[ALLD]{
    \label{fig:subfig:c-b}
    \includegraphics[scale=0.24]{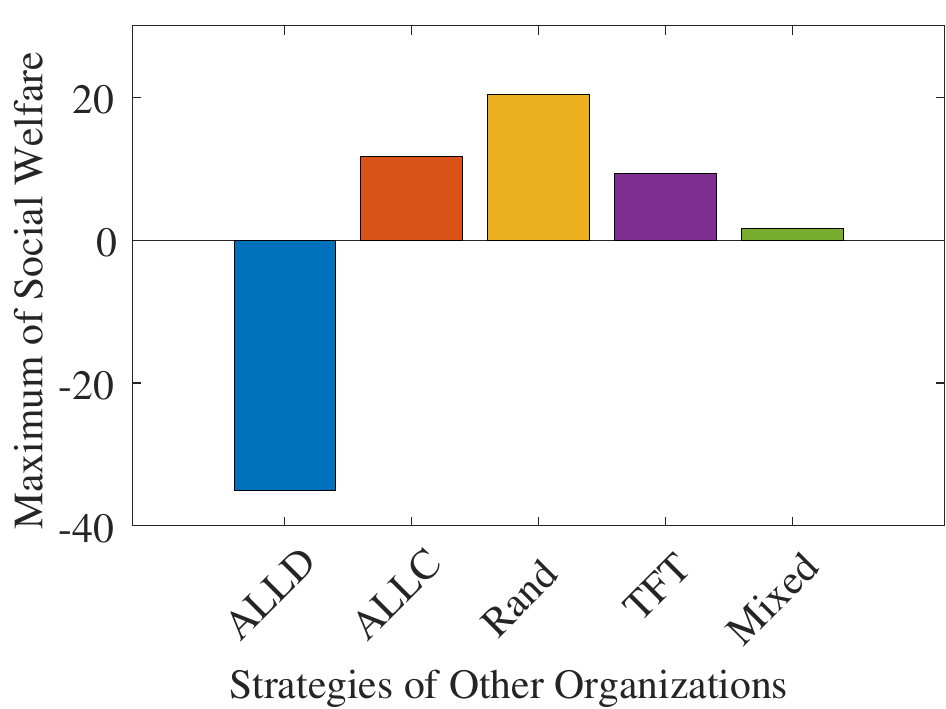}}
  \subfigure[ALLC]{
    \label{fig:subfig:c-c}
    \includegraphics[scale=0.24]{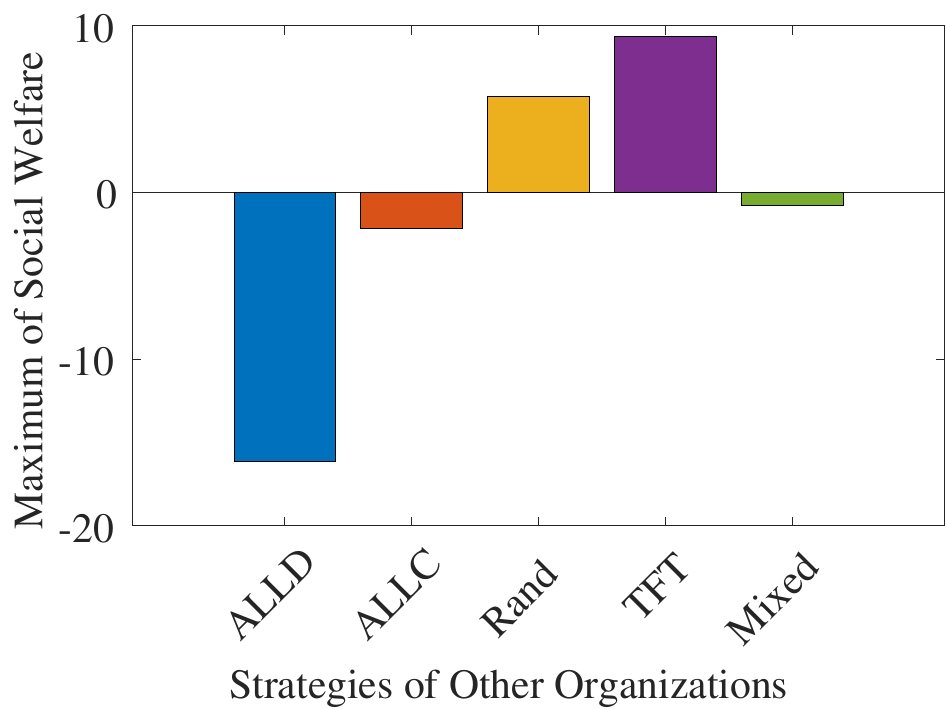}}
  \subfigure[Rand]{
    \label{fig:subfig:c-d}
    \includegraphics[scale=0.24]{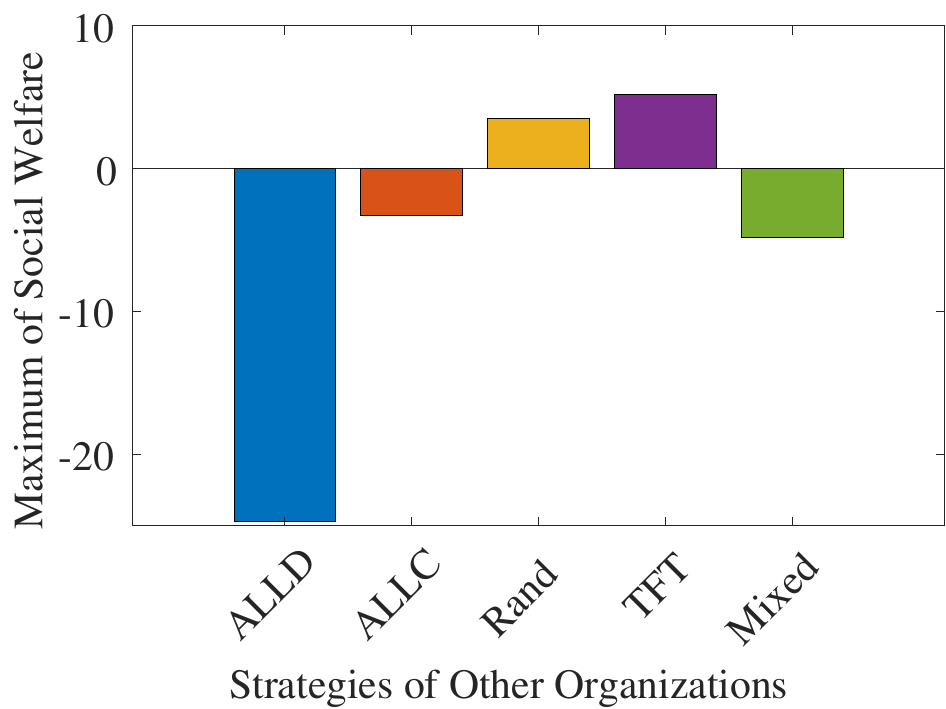}}
\caption{The maximum of the social welfare under different strategy combinations of organization $1$ and others on CIFAR-10.}
  \label{fig:CIFAR} 
\end{figure}

In order to verify the effectiveness of the MMZD strategy on maximizing the social welfare, we compare it with other five classical strategies, by simulating the entire cross-silo FL process for 20 rounds of game. Figs. \ref{fig:strategies}(a)-(d) display the maximum of the social welfare under different strategy combinations of organization $1$ and other organizations on the MNIST dataset. In Fig. \ref{fig:strategies}, organization $1$ adopts MMZD, all-defection (ALLD) \cite{hu2019quality}, all-cooperation (ALLC) \cite{hu2019quality}, and random (Rand) \cite{hu2019quality} strategies. Other organizations use ALLD, ALLC, Rand, Tit-For-Tat (TFT) \cite{nowak1993strategy}, and mixed (Mixed) strategies. Specifically, ALLD strategy is defined as: the organization does not perform local training at all. While ALLC strategy means that the organization participates in all $r$ global aggregation with their local updates in every game round. Organizations which adopt Rand strategy randomly participate in global aggregation from $0$ to $r$ communication rounds with the probability of $\frac{1}{r+1}$. TFT strategy is defined as the organizations randomly choose the number of participating global aggregation from $0$ to $\frac{\lfloor r \rfloor}{2}$ when the sum of communication rounds in last game round is less than $\frac{Nr}{2}$, otherwise they randomly choose the number of participating global aggregation from $\frac{\lfloor r+1 \rfloor}{2}$ to $r$. We define the mixed strategy as adopting a specific strategy chosen from ALLC, ALLD, Rand, and TFT. 

Fig. \ref{fig:strategies}(a) displays the maximum of the social welfare when organization $1$ uses MMZD. We can find that the maximums of the social welfare are the same. The initial action vector is randomly given, but the MMZD strategy can control the social welfare to reach the maximum regardless of others' strategies, verifying the effectiveness of our MMZD scheme. In Fig. \ref{fig:strategies}(b), organization $1$ employs ALLD, which means it acts like a free rider. The social welfare can be large if other organizations participate in global aggregation normally. But if other organizations use ALLD too, the social welfare could be extremely low. This shows the serious impact on the social welfare when all organizations become free riders. In Fig. \ref{fig:strategies}(c), organization $1$ employs ALLC, participating in all global aggregations. However, if other organizations become free riders (adopt ALLD), the effort of organization $1$ cannot turn the tide, and the social welfare is low. Fig. \ref{fig:strategies}(d) displays the maximum of the social welfare when organization $1$ adopts Rand, and the result is similar to Fig. \ref{fig:strategies}(b). 
According to Figs. \ref{fig:strategies}(a)-(d), only the MMZD strategy can stably maximize social welfare regardless of the strategies adopted by other organizations, which demonstrate the strong control ability of the MMZD strategy.

Figs. \ref{fig:MNIST-F}(a)-(d) and Figs. \ref{fig:CIFAR}(a)-(d) show the maximum social welfare controlled by the MMZD strategy compared to that of other strategies on the MNIST dataset and the CIFAR dataset, respectively. Due to the model performance, the social welfare in all cases is reduced, but the MMZD strategy still controls the maximum of the social welfare regardless of the strategies adopted by other organizations. The above experimental results also corroborate the theoretical derivation in Section \ref{maxwelfare} that individual organizations can adopt MMZD strategies to maximize social welfare.

\begin{figure}
\centering
\includegraphics[scale=0.5]{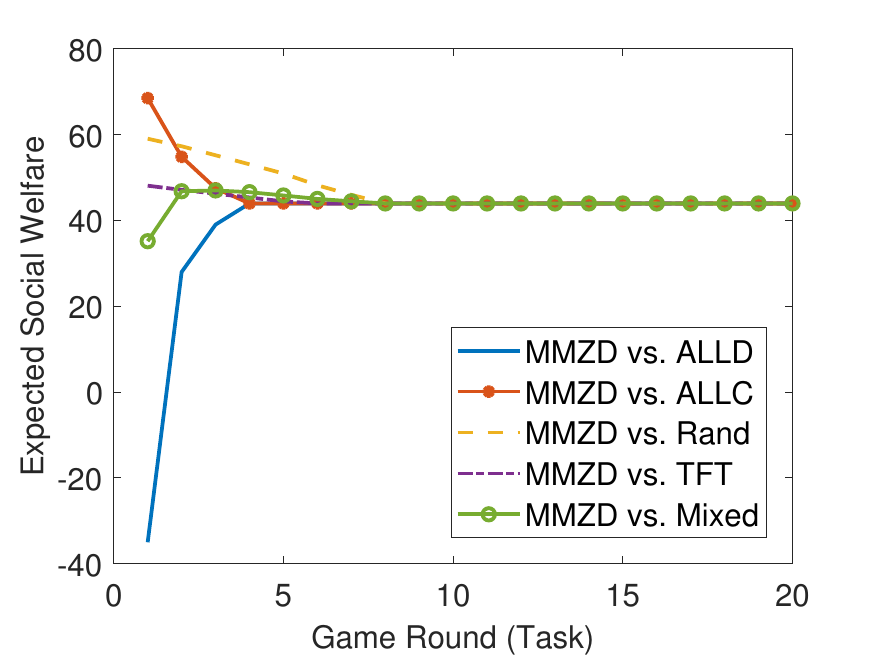}
\caption{Evolution of the expected social welfare.}
\label{fig:ZDconverge} 
\end{figure}
Fig. \ref{fig:ZDconverge} plots the expected social welfare changes in each game round, as organization $1$ adopts the MMZD strategy and other organizations employ different strategies. It is worth noting that the MMZD strategy works quickly as the expected social welfare converges within seven rounds. Fig. \ref{fig:strategies}(a) displays the final result in Fig. \ref{fig:ZDconverge}, which indicates that no matter what kind of strategies other organizations adopt, the social welfare finally converges to a fixed value, verifying the power of the proposed social welfare maximization game.

\subsection{Evaluation of the MMZDA Strategy}

In this subsection, we present the performance of the MMZDA to maximize the social welfare through a series of simulation experiments. And we compare it with the result of a single organization’s MMZD strategy. Meanwhile, we also consider the relative maximum of the social welfare in order to further analyze the control ability of MMZDA.

In Fig. \ref{fig:MMZDalliance vs indi}(a), we set $N^{\mathcal{A}}=4$ and randomly choose four organizations to form an MMZDA, with other parameters unchanged in order to compare with the previous experiment (Fig. \ref{fig:strategies}). From this figure, it's clear that no matter what strategies other organizations adopt, the MMZDA strategy expands the maximum value of the social welfare, comparing with the MMZD strategy performed by a single organization. This experimental result verifies Theorem 
\ref{th:MMZDalliance}. That is, the social welfare can achieve larger maximum value by MMZDA than that by the single MMZD organization. 

Based on the same initial settings, Figs. \ref{fig:MMZDalliance vs indi}(b)-(f) display the evolution process of of the expected social welfare, which is two-fold. The red line represents that the MMZDA strategy is used, while the blue line shows the impact of single organization using the MMZD strategy on the maximum of the social welfare. By comparison, we can find that as the number of game rounds increases, no matter what strategies other organizations adopt, the MMZDA strategy always enables the maximum social welfare gradually converge to a larger fixed value, but it does not have a faster convergence speed. It is worth noting that in Fig. \ref{fig:MMZDalliance vs indi}(e), the two curves do not have the same trend, because in the first game round, the expected social welfare is exactly between the convergent values characterized by MMZD and MMZDA, respectively.

\begin{figure}[h]
\centering
  \centering
  \subfigure[The maximum of the social welfare under MMZDA and MMZD]{
    \label{fig:subfig-2:a}
    \includegraphics[scale=0.26]{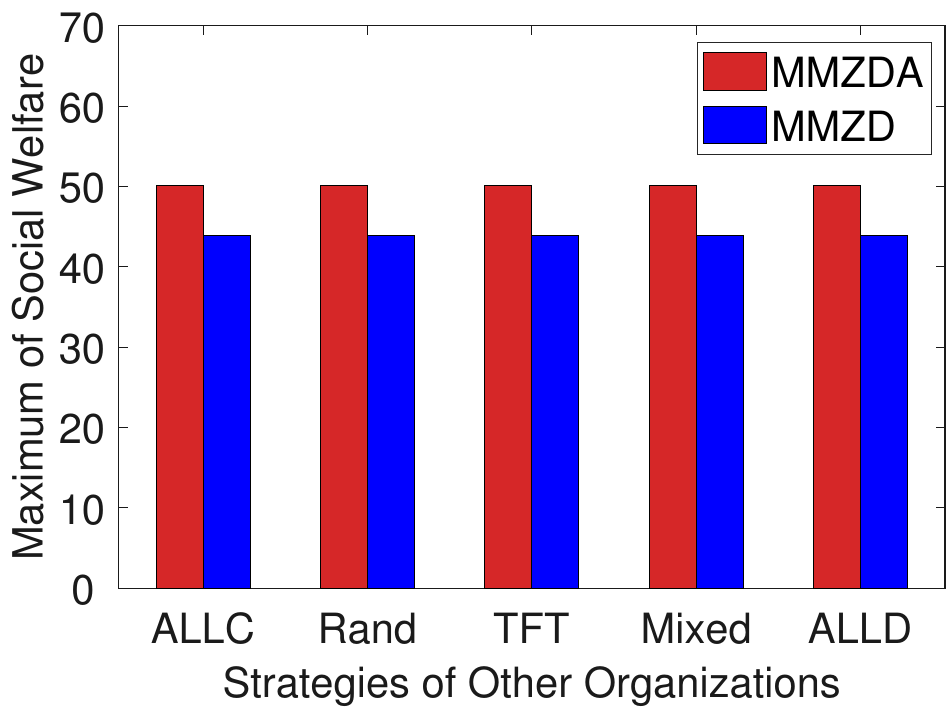}}
  \subfigure[ALLD]{
    \label{fig:subfig-2:b}
    \includegraphics[scale=0.281]{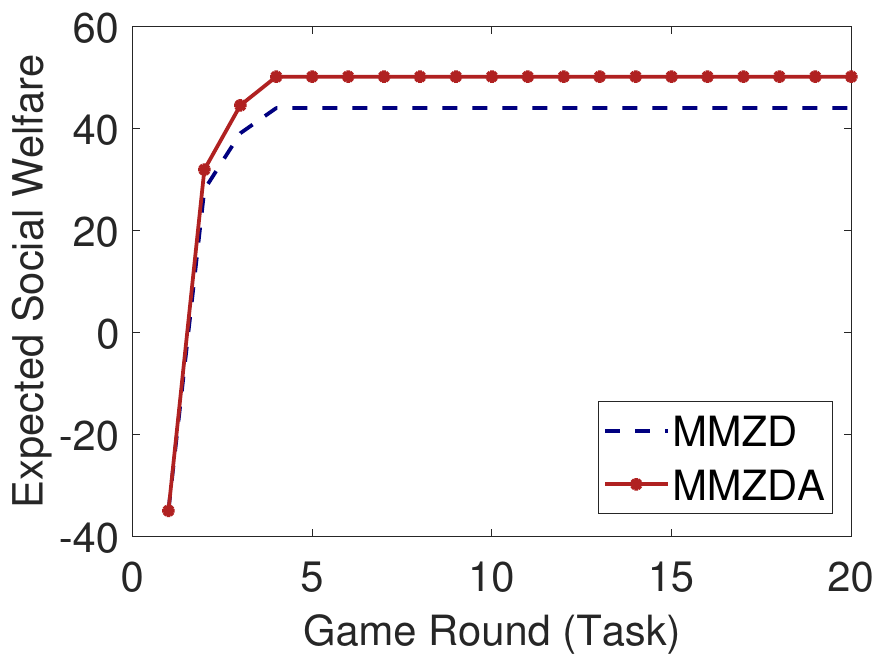}}
  \subfigure[ALLC]{
    \label{fig:subfig-2:c}
    \includegraphics[scale=0.281]{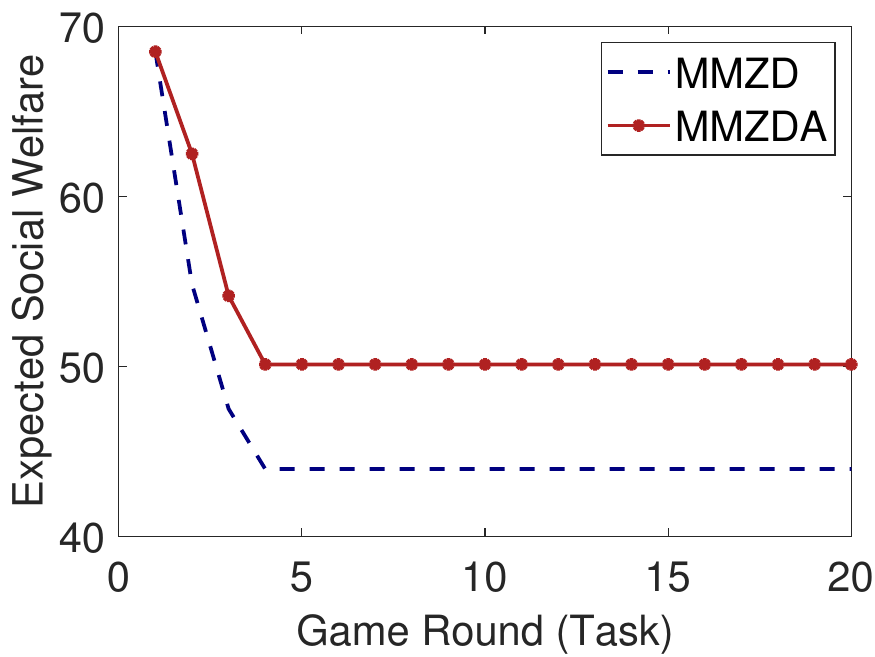}}
  \subfigure[Rand]{
    \label{fig:subfig-2:d}
    \includegraphics[scale=0.281]{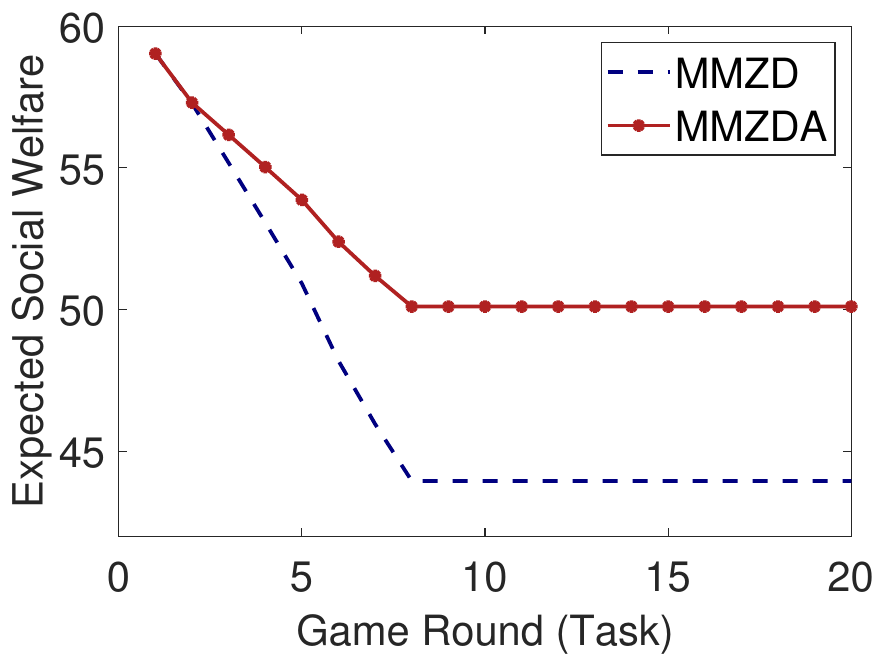}}
  \subfigure[TFT]{
    \label{fig:subfig-2:e}
    \includegraphics[scale=0.281]{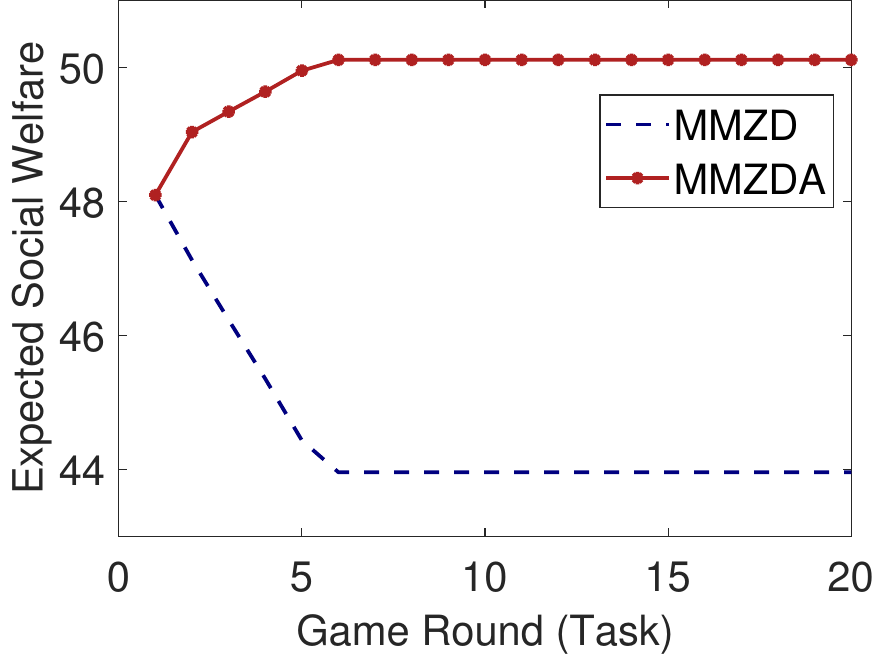}}
  \subfigure[Mixed]{
    \label{fig:subfig-2:f}
    \includegraphics[scale=0.281]{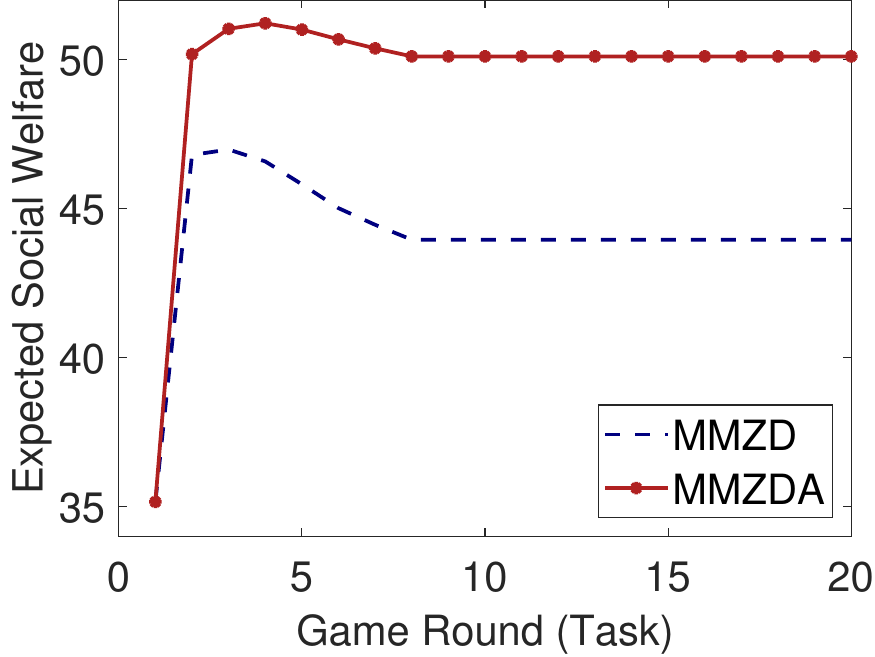}}
\caption{Evolution of the expected social welfare under different strategy combinations, under comparison between the MMZDA strategy and the MMZD individual strategy.}
  \label{fig:MMZDalliance vs indi} 
\end{figure}

In Fig. \ref{fig:changeNAkeepN}, we set $N=10$. Then in Fig. \ref{fig:changeNAkeepN}(a), we explore the changes of the expected social welfare as the number of organizations in the alliance increases when the total number of organizations remains unchanged. Whenever $N^{\mathcal{A}}$ takes a different value, we randomly generate the MMZDA organizations from these 10 organizations. In the histogram, we can conclude that when the total number of organizations does not change, the more organizations that join the MMZD alliance, the higher the maximum social welfare that can be controlled. This also confirms our analysis of the MMZDA strategy, because the increase of $N^{\mathcal{A}}$ expands the range of candidate values in (\ref{eq:case1_gammamin}) and (\ref{eq:case2_gammamin}), thereby increasing the maximum value of the social welfare. Besides, in Fig. \ref{fig:changeNAkeepN}(b), we take the social welfare of all organizations participating in all communication rounds as the absolute maximum value of the social welfare, and study the ratio of the social welfare controlled by MMZDA to the former value. We call this ratio relative maximum of the social welfare. In this case, $N^{\mathcal{A}}$ and the relative maximum of the social welfare are also positively correlated. Together with  Fig. \ref{fig:changeNAkeepN}(a), they show that when $N$ is constant, MMZDA's ability to control the maximum value of the social welfare increases as the number of MMZDA organizations increases. 

\begin{figure}
\centering
  \subfigure[The absolute maximum of the social welfare]{
    \label{fig:changeNAkeepN:abs}
    \includegraphics[scale=0.252]{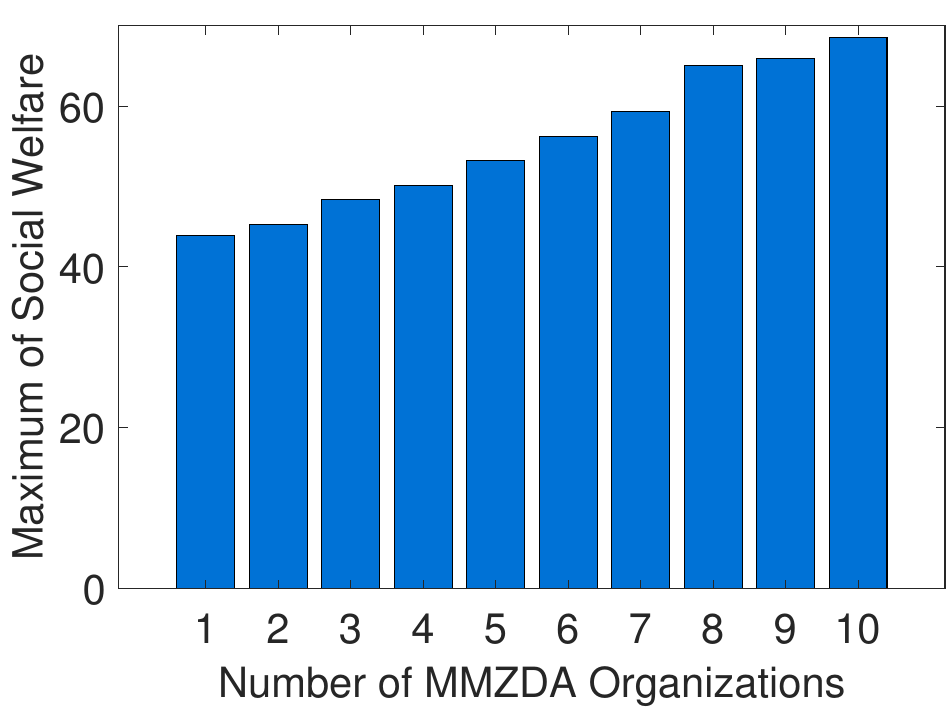}}
  \subfigure[The relative maximum of the social welfare]{
    \label{fig:changeNAkeepN:rel}
    \includegraphics[scale=0.265]{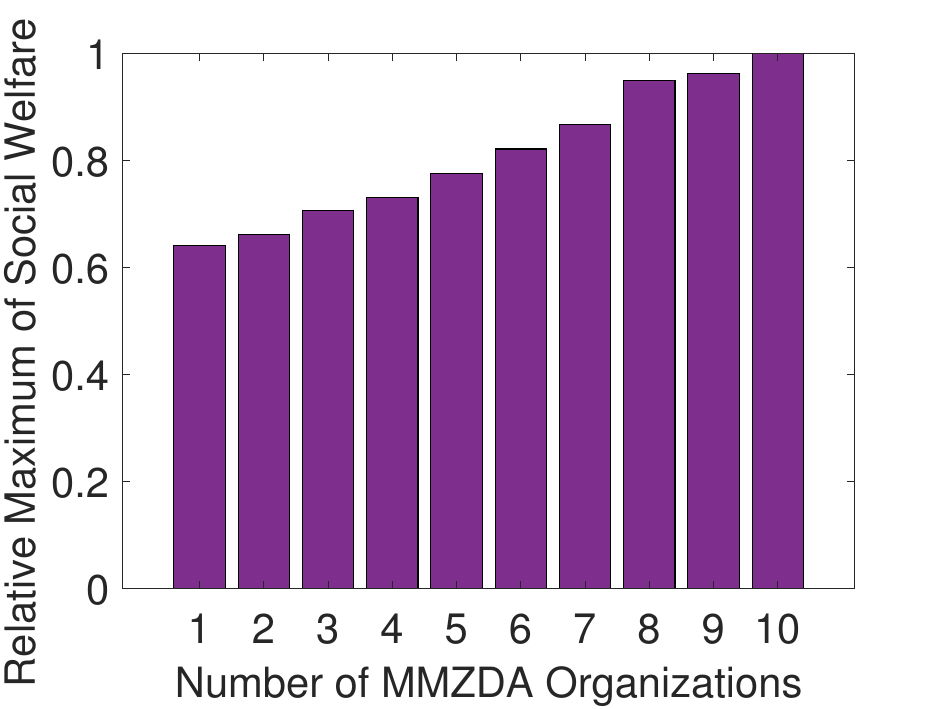}}
\caption{The absolute maximum of the social welfare and the relative maximum of the social welfare as the number of organizations in the MMZDA increases.}
\label{fig:changeNAkeepN} 
\end{figure}

Correspondingly, we continue to investigate the impact of $N$ while $N^{\mathcal{A}}$ does not change. In fact, the change of the total number of organizations $N$ brings a series of differences, including adding new organizations' parameters, changing the organizations’ local datasets, and then changing the coefficients $\theta_0$ and $\theta_1$, which also changes the utility function. For verification, we generate a new simulation dataset, and randomly select $N$ organizations in different situations. We continue to randomly select $N^{\mathcal{A}}=5$ MMZDA organizations from $N$ organizations. According to the new dataset, the corresponding coefficients $\theta_0$ and $\theta_1$ are generated to calculate the final social welfare value. For each distinct $N$, we repeat the process of randomly selecting $N$ organizations 10 times. For each group of selected organizations, we repeat the process of randomly selecting five MMZDA organizations 10 times. Finally, we take the average value as the expected social welfare. As shown in Fig. \ref{fig:changeNkeepNA}(a), we found that simply changing $N$ does not intuitively change the maximum value of the social welfare, because the impact of newly joined organizations on the social welfare is mutative. More specifically, when $N$ is small or even close to $N^{\mathcal{A}}$ (i.e., the case of $N=5$), the maximum value of the social welfare is limited by the total number of organizations. When $N$ is large (i.e., the case of $N=25$), the small MMZD alliance reduces the ability to control the maximum of the social welfare, and cannot bring a large increase of the maximum value. But in Fig. \ref{fig:changeNkeepNA}(b), $N$ and the relative maximum of the social welfare are negatively correlated. It reflects that the increase in $N$ weakens MMZDA organizations' control over the maximum value of the social welfare, although this does not mean a decrease in the absolute maximum value of the social welfare.

\begin{figure}
\centering
    \subfigure[The absolute maximum of the social welfare]{
    \label{fig:changeNkeepNA:abs}
    \includegraphics[scale=0.252]{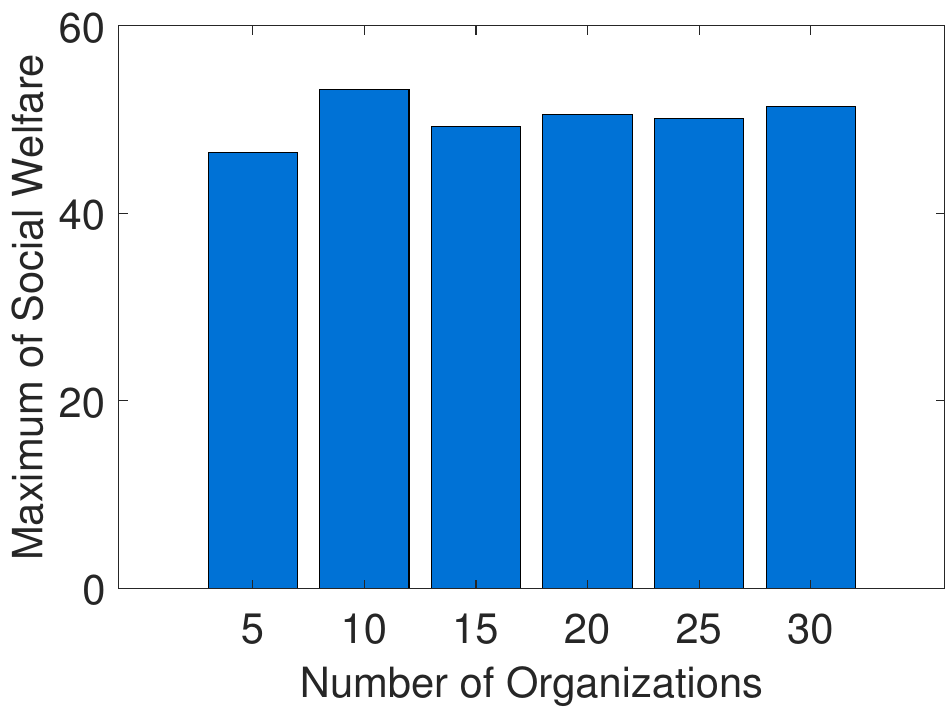}}
  \subfigure[The relative maximum of the social welfare]{
    \label{fig:changeNkeepNA:rel}
    \includegraphics[scale=0.265]{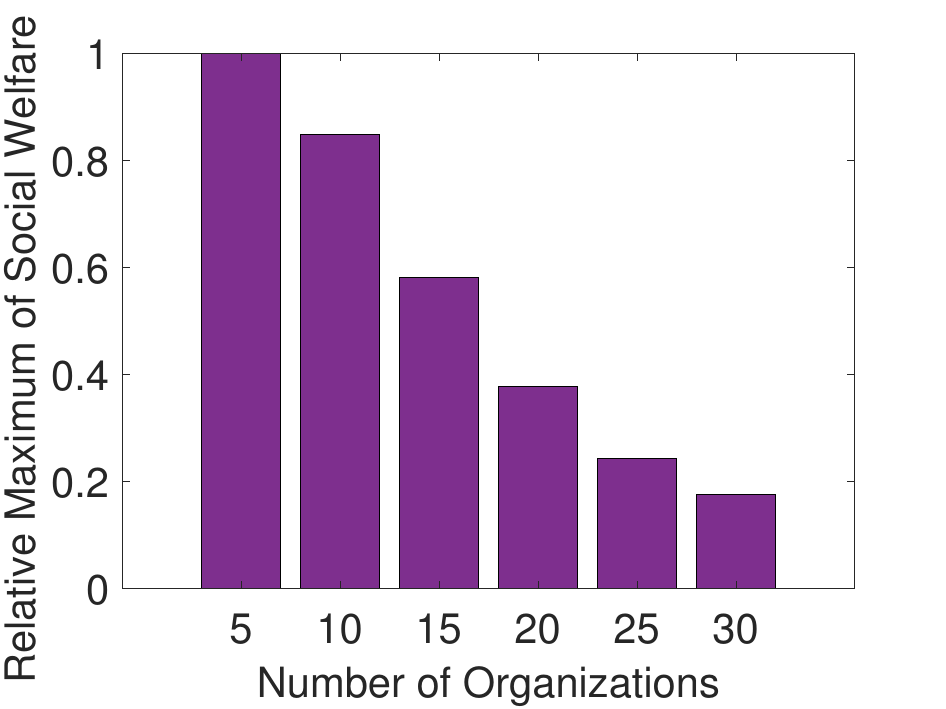}}
\caption{The absolute maximum of the social welfare and the relative maximum of the social welfare as the total number of organizations increases.}
\label{fig:changeNkeepNA} 
\end{figure}

In Fig. \ref{fig:keepNdivideNA}, we use the same dataset as the previous experiment (Fig. \ref{fig:changeNkeepNA}). In this experiment, we keep the ratio of the number of alliance organizations to the total number of organizations unchanged. Specifically, we set the number of alliance organizations to be $\frac{1}{3}$ of the total number of organizations. In Fig. \ref{fig:keepNdivideNA}(a), clearly, the maximum value of the social welfare increases as $N$ increases. This is because more organizations participate in the cross-silo FL game, and the social welfare that can be increased when the proportion of MMZDA organizations remains unchanged. While Fig. \ref{fig:keepNdivideNA}(b) implies that under the same ratio, the relative maximum of the social welfare fluctuates around $0.6$ within a certain range. There is not much change overall, and the fluctuations come from the heterogeneity of the organizations. In fact, the control ability of MMZDA also depends on the utility vectors $\mathbf{v}$ of the alliance organizations. During the experiment, we randomly select the alliance organization to more objectively reflect the expected control ability of MMZDA. 

\begin{figure}
\centering
    \subfigure[The absolute maximum of the social welfare]{
    \label{fig:keepNdivideNA:abs}
    \includegraphics[scale=0.252]{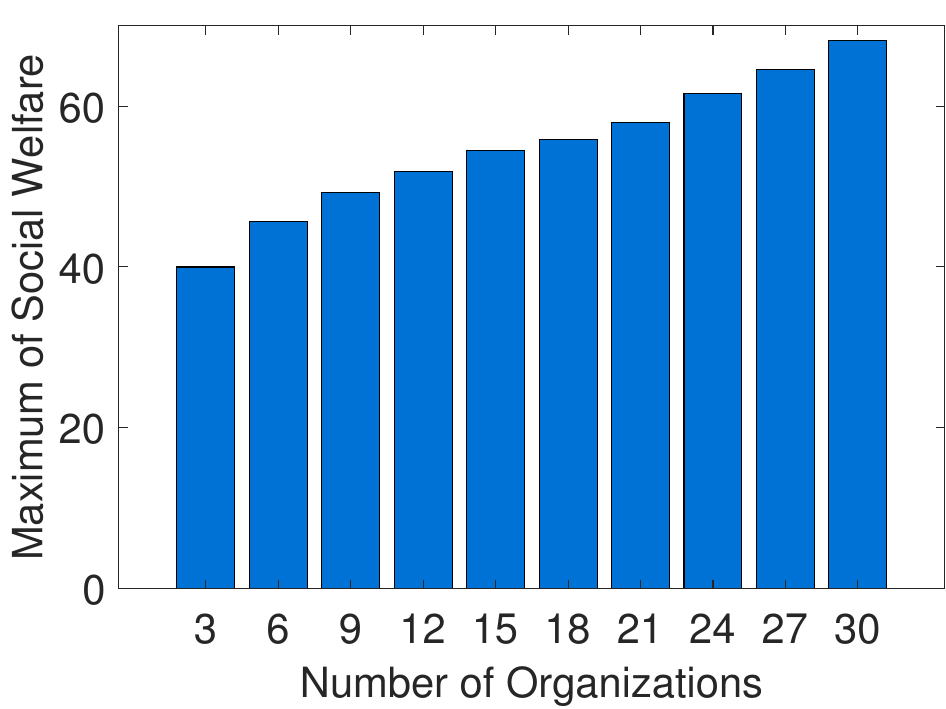}}
  \subfigure[The relative maximum of the social welfare]{
    \label{fig:keepNdivideNA:rel}
    \includegraphics[scale=0.265]{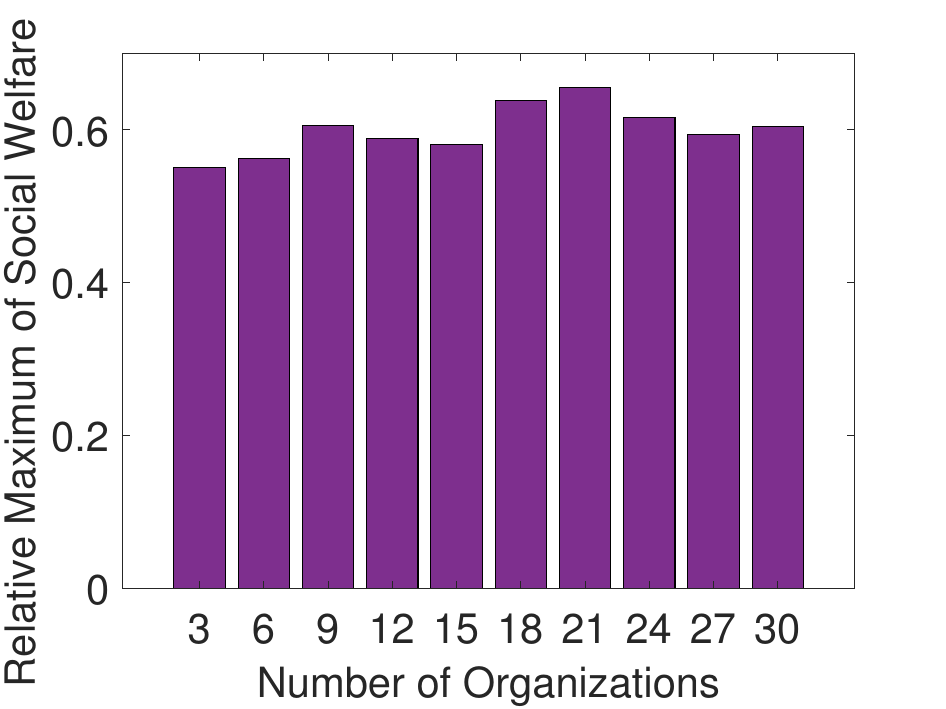}}
\caption{The absolute maximum of the social welfare and the relative maximum of the social welfare as the total number of organizations increases with the rate of MMZDA organizations unchanged.}
\label{fig:keepNdivideNA} 
\end{figure}

\section{Conclusion}
In this paper, we model the cross-silo FL game among organizations as a public goods game, revealing the social dilemma in the cross-silo FL game theoretically. In order to overcome the social dilemma, we propose a brand-new method using the MMZD to solve the social welfare maximization problem. By the means of the MMZD, an individual organization can unilaterally control the social welfare at a certain level, regardless of other organizations' strategies. Meanwhile, we explore the MMZDA consisting of multiple MMZD organizations, which further improves the control of the maximum social welfare. Moreover, our approaches can maintain the stability and sustainability of the system without extra cost. Simulation results prove that the MMZD strategy can efficiently and effectively control the social welfare. Furthermore, the MMZDA achieves a larger maximum social welfare, which shows its superiority in reducing the loss from selfish behaviors.

\ifCLASSOPTIONcaptionsoff
  \newpage
\fi



%
\bibliographystyle{IEEEtran}
\bibliography{./reference.bib,./IEEEexample}
\begin{IEEEbiography}
[{\includegraphics[width=1in,height=1.25in,clip,keepaspectratio]{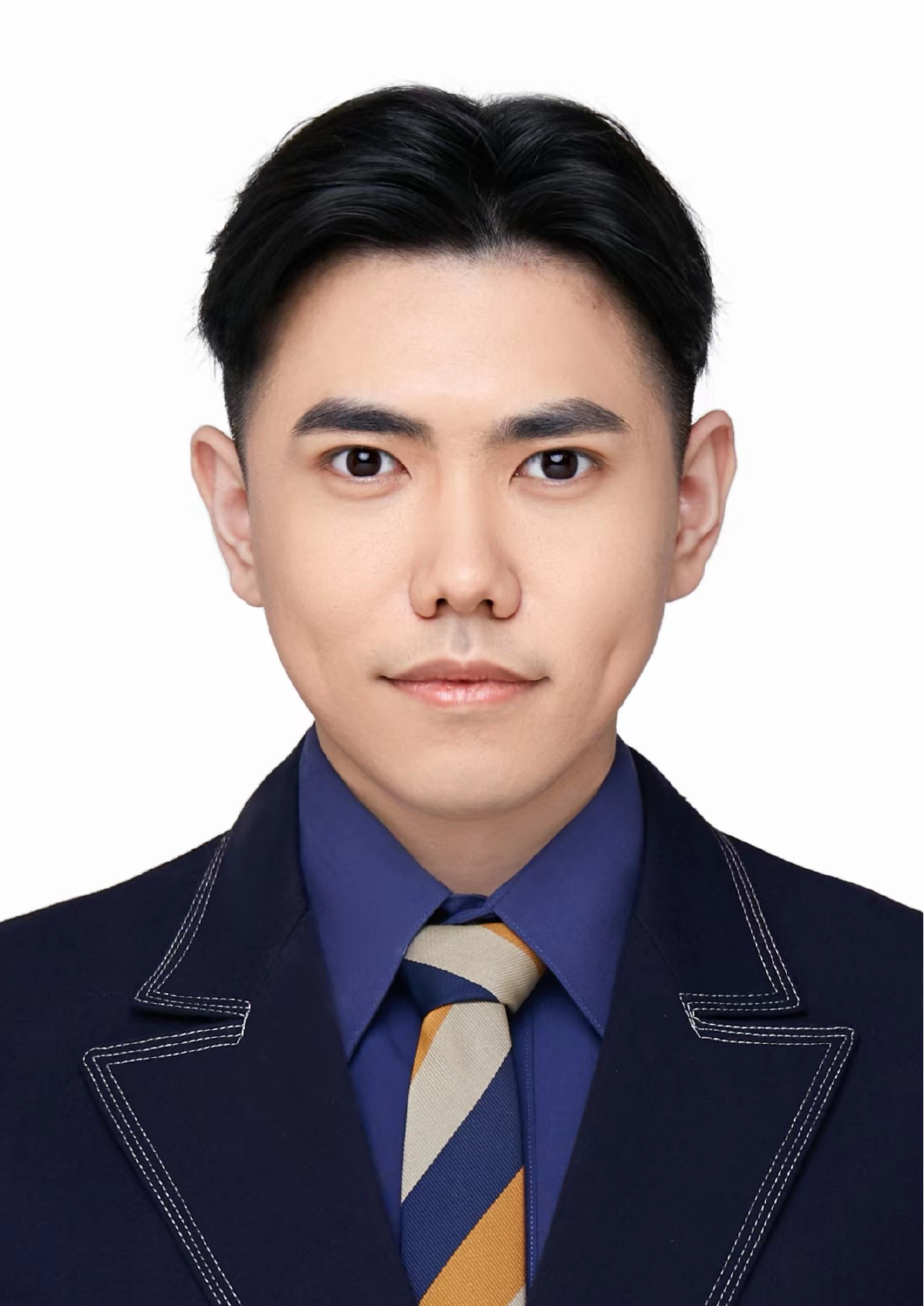}}]{Jianan Chen} received his B.S. degree in Computer Science and Technology from Beijing Normal University in 2019. He is currently pursuing his Ph.D. degree in the Department of Computer and Information Science, Indiana University-Purdue University Indianapolis (IUPUI). His research interests include block chain, network science and game theory.
\end{IEEEbiography}
\begin{IEEEbiography}
[{\includegraphics[width=1in,height=1.25in,clip,keepaspectratio]{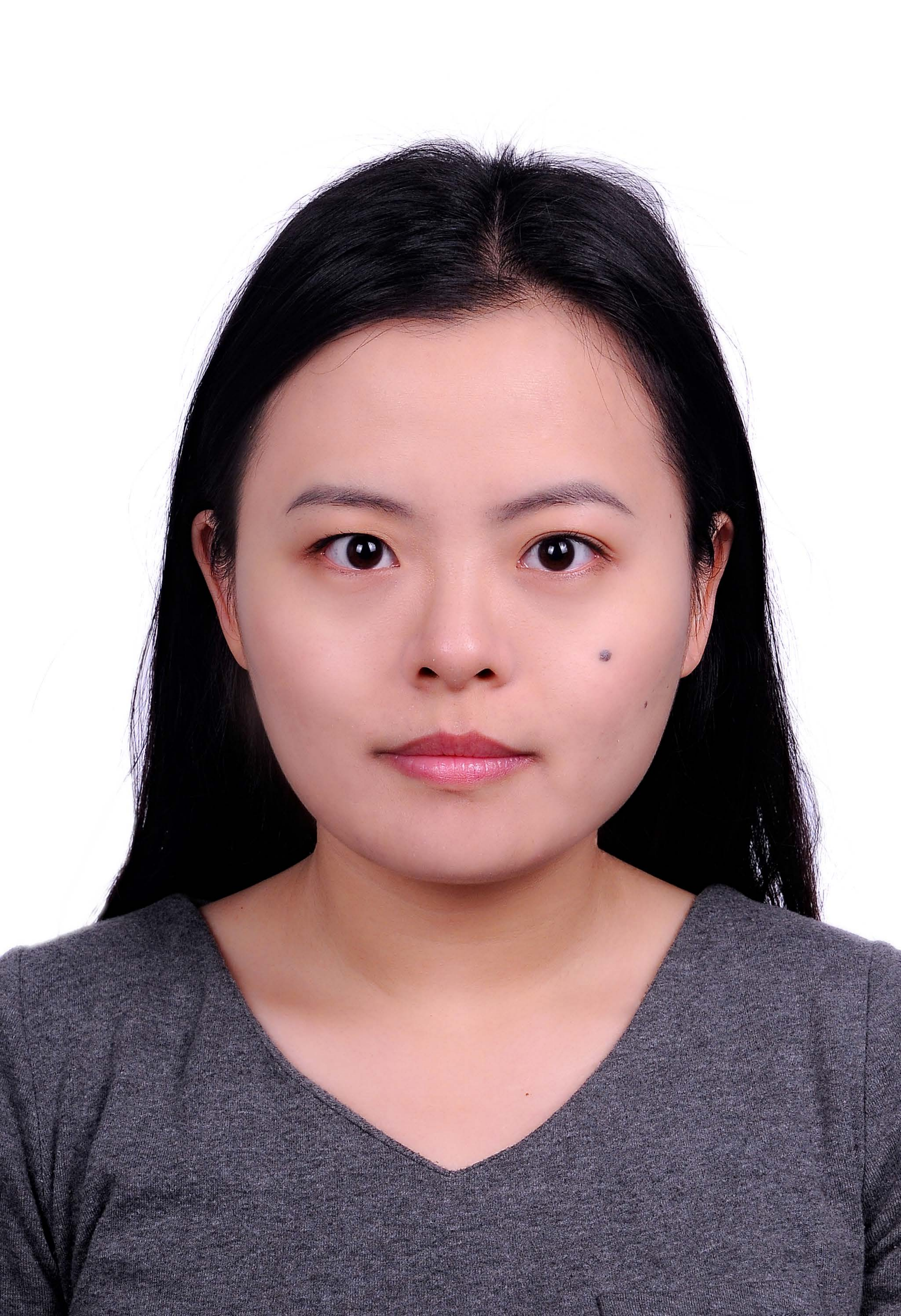}}]{Qin Hu} received her Ph.D. degree in Computer Science from the George Washington University in 2019. She is currently an Assistant Professor with the Department of Computer and Information Science, Indiana University-Purdue University Indianapolis (IUPUI). She has served on the Editorial Board of two journals, the Guest Editor for four journals, the TPC/Publicity Co-chair for several workshops, and the TPC Member for several international conferences. Her research interests include wireless and mobile security, edge computing, blockchain, federated learning, and crowdsensing.
\end{IEEEbiography}
\begin{IEEEbiography}
[{\includegraphics[width=1in,height=1.25in,clip,keepaspectratio]{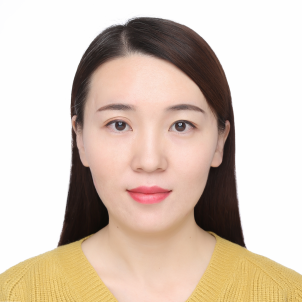}}]{Honglu Jiang} received her Ph.D. degree in computer science from The George Washington University, Washington, DC, USA, in 2021. She is currently an Assistant Professor with the Department of Computer Science and Software Engineering, Miami University, Oxford, OH, USA. Her research interests include wireless networks, differential privacy, big data, and privacy preservation.
\end{IEEEbiography}

%








\end{document}